\documentclass[a4paper]{article}
\usepackage{amsthm,amscd}
\usepackage{amsmath,amssymb,amsfonts}
\usepackage{multirow}
\usepackage{graphicx}               
\usepackage{color}                  
\usepackage[absolute]{textpos} 
\usepackage{multicol}
\usepackage{array}

\usepackage[T1]{fontenc}

\usepackage[utf8]{inputenc}
\usepackage[english]{babel}

\newcommand{\beq} {\begin{eqnarray*}}
\newcommand{\eeq} {\end{eqnarray*}}
\newcommand{\trm} {\textrm}
\newcommand{\tbf} {\textbf}
\newcommand{\noi} {\noindent}

\providecommand{\norm}[1]{\left\Vert#1\right\Vert}
\def \R{\mathbb{R}}

\def \N{\mathbb{N}}

\def \E{\mathbb{E}}
\def \P{\mathbb{P}}
\def \Var{\hbox{{\rm Var}}}

\def \Cov{\hbox{{\rm Cov}}}

\providecommand{\abs}[1]{\left\lvert#1\right\rvert}
\newcommand{\pscal}[1]{\langle #1 \rangle}

\DeclareMathOperator*{\argmin}{Argmin}

 


\newcommand{\Tr}{\mathrm{Tr}}
\newcommand{\Xu}{X_{\bf u}}
\newcommand{\Cu}{C_{\bf u}}
\newcommand{\Xtu}{X_{\sim\bf u}}
\newcommand{\Ctu}{C_{\sim\bf u}}
\newcommand{\Cutu}{C_{\bf u,\sim\bf u}}
\newcommand{\suf}{S^{\bf u}(f)}

\theoremstyle{plain}
\newtheorem{thm}{Theorem}[section]
\newtheorem{proposition}{Proposition}[section]
\newtheorem{lem}{Lemma}[section]
\newtheorem{cor}{Corollary}[section]

\newtheorem{ex}{Example}[section]
\newtheorem{rmk}{Remark}[section]
\newtheorem{defi}{Definition}[section]

\newcommand{\cro}[1]{\left[{#1}\right]}

\parindent 0pt

\renewcommand{\thefootnote}{\arabic{footnote}}

\begin{document}
\title{Sensitivity analysis for multidimensional and functional outputs}
\footnotetext[1]{Laboratoire de Mathématiques d'Orsay, B\^atiment 425, Université Paris-Sud, 91405 Orsay, France}
\footnotetext[2]{Laboratoire de Statistique et Probabilités, Institut de Mathématiques
		Université Paul Sabatier (Toulouse 3) 31062 Toulouse Cedex 9, France}
\author{\renewcommand{\thefootnote}{\arabic{footnote}}Fabrice Gamboa\footnotemark[2],
Alexandre Janon\footnotemark[1],
Thierry Klein\footnotemark[2],
Agn\`es Lagnoux\footnotemark[2]}

\maketitle

\begin{abstract}
Let  $X:=(X_1, \ldots, X_p)$ be   random objects (the inputs),  defined on some probability space $(\Omega,{\mathcal{F}}, \mathbb P)$ and valued in some measurable space $E=E_1\times\ldots \times E_p$. Further,  let  $Y:=Y = f(X_1, \ldots, X_p)$ be  the output.
Here,  $f$ is a measurable function from $E$ to some Hilbert space $\mathbb{H}$ ($\mathbb{H}$ could be either of finite or infinite dimension).
In this work, we give a natural generalization of the Sobol  indices (that are classically defined when $Y\in\R$ ), when the output belongs to  $\mathbb{H}$. These indices have very nice properties. First, they are invariant. under isometry and scaling. Further they can be, as in dimension $1$, easily estimated by using the so-called Pick and Freeze method.  We investigate the asymptotic behaviour of such estimation scheme.

\end{abstract}

\begin{bf}Keywords: \end{bf}
  Semi-parametric efficient estimation, sensitivity analysis, quadratic functionals, Sobol indices, vector output, temporal output, concentration inequalities.\\
\begin{bf} Mathematics Subject Classification. \end{bf}  
62G05, 62G20

\tableofcontents

\section{Introduction}
\indent
Many mathematical models encountered in applied sciences involve a large number of poorly-known parameters as inputs. It is important for the practitioner to assess the impact of this uncertainty on the modeliid output. An aspect of this assessment is sensitivity analysis, which aims to identify the most sensitive parameters. In other words, parameters that have the largest influence on the output. In global stochastic sensitivity analysis, the input variables are assumed to be  independent random variables.  Their probability distributions  account for the practitioner's belief about the input uncertainty. This turns the model output into a random variable. Using the so-called Hoeffding decomposition \cite{van2000asymptotic}, the total variance of a scalar output can be split down into different partial variances. Each of these partial variances measures the uncertainty on the output induced by the corresponding input variable. By considering the ratio of each partial variance to the total variance, we obtain a measure of importance for each input variable called the \emph{Sobol index} or \emph{sensitivity index} of the variable \cite{sobol1993}; the most sensitive parameters can then be i.dentified and ranked as the parameters with the largest Sobol indices.  A clever way to estimate the Sobol indices is to use the so-called Pick and Freeze sampling scheme (see  \cite{sobol1993} and more recently \cite{janon2012asymptotic}). This sampling scheme transforms the complex initial statistical problem of estimation into a simple linear regression problem. Widely used by practitioners in many applied fields (see for example \cite{saltelli2008global} and the complete bibliography given therein), 
the mathematical analysis of the Pick and Freeze scheme has been recently performed in \cite{janon2012asymptotic} and \cite{pickfreeze} (see also \cite{xu2011} where some mathematical draft ideas are given). In the last decade, many authors have proposed some generalizations of Soboliid indices for scalar outputs (see for example \cite{bobor2007}, \cite{Owen2012},  \cite{Owen2013} and \cite{Fort2013}).   The aim of the present paper is twofold. First, we wish to build some extensions of Sobol indices in the case of vectorial or functional output. Secondly, we aim to construct some Pick and Freeze estimators of such extensions and to study their asymptotic and non-asymptotic properties. Generalization of the Sobol index for multivariate or functional outputs has been considered  in an empirical way in 
\cite{campbell2006sensitivity} and \cite{lamboni2011multivariate}. In this paper, we consider and study a new generalization of Sobol indices for vector or functional  outputs. This generalization was implicitly considered in the pioneering work of Lamboni et al ( \cite{lamboni2011multivariate}). The starting point of the construction of these new indices relies on the multidimensional Hoeffding decomposition of the vectorial output.  Further, due to non-commutativity, many choices for an extension of Sobol indices are possible.  To restrict the choice we both require that the indices satisfy natural invariance properties and remain easy to estimate when using a Pi.ck and Freeze sampling scheme.  

\indent
The paper is organized as follows. To begin with, we start in the next section by developing and discussing two examples. 
These examples illustrate the difficulty of extending directly scalar Sobol indices to a multidimensional context.
The generalized Sobol indices and their main properties are given in Section \ref{sec:definition}. 
In a nutshell, these newiid Sobol indices are the same as the classical ones of the 
unidimensional context, up to the trace operation taken on both terms of the ratio. We show that these quantities are well-tailored for sensitivity 
analysis, as they are invariant under isometry and scaling of the output. 
In Section \ref{sec:uniqueness}, 
we introduce another general family of Sobol matricial indices. They are also compatible with the Hoeffding decomposition and they also satisfy the 
natural  invariance properties. Each element of this family depends on a probability measure on the group of signed permutation matrices. 
The main drawback of these quantities is that they are not so easy to estimate unlike the indices introduced in Section \ref{sec:definition}. In Section 
\ref{sec:estimation} we revisit the Pick and Freeze sampling scheme and study the asymptotic and non-asymptotic properties of the Pick and Freeze estimators of the new indices. These properties are numerically illustrated on two relevant examples in Section \ref{sec:numill}. To finish, the extension of our results, we present in  Section \ref{sec:functional} 
the case of functional outputs.

\section{Motivation}\label{sec:ex_intro}
We begin by considering two examples that enlighten the need for a proper definition of sensitivity indices for multivariate outputs.

\begin{ex}\label{ex0}
Let us consider the following nonlinear model
 $$Y=f^{a,b}(X_{1},X_{2}):=\begin{pmatrix}
f^{a,b}_1(X_1,X_2)\\
f^{a,b}_2(X_1,X_2)
\end{pmatrix}=\begin{pmatrix}
X_{1}+X_{1}X_{2}+X_{2}\\
aX_{1}+bX_{1}X_{2}+X_{2}
\end{pmatrix}$$ 
where $X_1$ and $X_2$ are  assumed to be i.i.d. standard Gaussian random variables  (r.v.s).

\noi\\
First, we compute the one-dimensional Sobol indices $S^{\bf j }(f^{a,b}_i)$ of $f^{a,b}_i$ with respect to $X_j$ ( $i,j=1,2$). We get
\beq
(S^{\bf 1}(f^{a,b}_1),S^{\bf 1}(f^{a,b}_2))&=&(1/3,a^2/(1+a^2+b^2))\\
(S^{\bf 2}(f^{a,b}_1),S^{\bf 2}(f^{a,b}_2))&=&(1/3,1/(1+a^2+b^2)).\\
\eeq
So that, the ratios $$\frac{S^{\bf 1}(f^{a,b}_i)}{S^{\bf 2}(f^{a,b}_i)},\; i=1,2$$ do not depend on $b$. Moreover, for $|a|>1$, as this ratio is greater than $1$,  $X_1$ seems to have more    influence on the output. 
\noi\\
Now let us perform a sensitivity analysis on  $\|Y\|^2$. Straightforward calculus lead to 
$$S^{\bf 1}(\|Y\|^2)\geq S^{\bf 2}(\|Y\|^2) \iff (a-1)(a^3+a^2+5a+5-4b)\geq 0.$$ 
\\
\begin{figure}[h]
\begin{center}
\includegraphics[height=5cm,width=5cm]{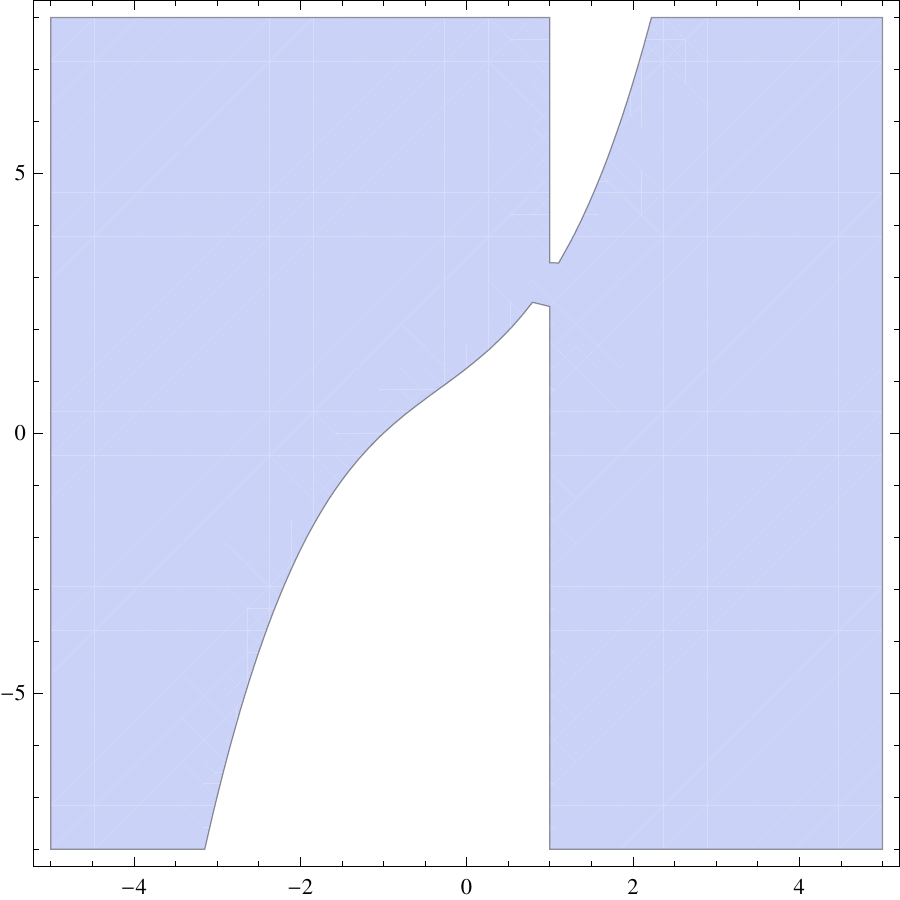}
\caption{Plot of $(a-1)(a^3+a^2+5a+5-4b)\geq 0$. The blue corresponds to regions where $S^{\bf 1}(\|Y\|^2)\geq S^{\bf 2}(\|Y\|^2)$.}
\label{plot_ex1}
\end{center}
\end{figure}

\noi \\
For the quantity $\|Y\|^2$, the region where $X_1$ is the most influent variable  depends on the value of $b$. This region is  not very  intuitive. We plot in Figure \ref{plot_ex1} the region where   $X_1$ is the most influent variable.
\end{ex}

\begin{ex}\label{ex3}
Here, we study the following two-dimensional model 

$$Y=f(X_{1},X_{2})=\begin{pmatrix}
X_1 \cos X_2\\
X_1 \sin X_2
\end{pmatrix}$$ 
with $(X_1, X_2)\sim$ Unif$([0;10])$ $\otimes$ Unif$([0; \pi/2])$.

\noi\\
We obviously  get
\beq
S^{\bf 1}(f^{a,b}_1)&=&S^{\bf 1}(f^{a,b}_2)=\frac{10}{5\pi^2-30}\approx0.52\\
S^{\bf 2}(f^{a,b}_1)&=&S^{\bf 2}(f^{a,b}_2)=\frac{3(\pi^2-8)}{4(\pi^2-6)}\approx0.36.\\
\eeq
So that $X_1$ seems to have  more influence on the output than $X_2$.

\noi\\
If we consider $\|Y\|^2$, we straightforwardly get $\|Y\|^2=X_1^2$ that does not depend on $X_2$.
\end{ex}

A last motivation to introduce new Sobol  indices is related to the statistical problem of their estimation.
As  the dimension increases  the statistical estimation of the whole  vector of scalar Sobol indices becomes more and more expensive. Moreover, the interpretation of such a large vector is not easy. This strengthens the fact that one needs  to introduce Sobol indices of small dimension, which condense all the information contained in a large collection of scalars.

In the next section we define new Sobol indices  generalizing   the scalar ones and resuming all the information.

\section{Generalized Sobol indices}\label{sec:definition}

\subsection{Definition of the new indices}

We denote by $X:=(X_1, \ldots, X_p)$ the random input,  defined on some probability space $(\Omega,{\mathcal{F}},\mathbb P)$ and valued in some measurable space $E=E_1\times\ldots \times E_p$. We denote also by $Y$ the output 
\[ Y = f(X_1, \ldots, X_p), \]
where $f: E \rightarrow \R^k$ is an unknown measurable function ($p$ and $k$ are positive integers). We assume that $X_1,\ldots,X_p$ are independent and that $Y $ is square integrable (i.e. $\E\left(\| Y\|^{2}\right)<\infty$). We also assume, without loss of generality,  that the covariance matrix of $Y$ is positive definite.

Let $\bf  u$ be a subset of $\{1, \ldots, p\}$ and denote by $\sim\!\bf u$ its complementary in $\{1, \ldots, p\}$.

Further, we set $X_{\bf u}=(X_i, i \in \bf u)$ and $E_{\bf u}=\prod_{ i \in \bf u}E_i$.\\

As the inputs $X_1, \ldots, X_p$ are independent, $f$ may be decomposed through the so-called  Hoeffding decomposition \cite{van2000asymptotic}
\begin{equation}
\label{e:hoeff}
f(X) = c + f_{\bf u}(\Xu) + f_{\sim\bf u}(\Xtu) + f_{\bf u,\sim\bf u}(\Xu,\Xtu), 
\end{equation}
where $c\in\R^k$, $f_{\bf u}: E_{\bf u}\rightarrow\R^k$, $f_{\sim\bf u}: E_{\sim\bf u}\rightarrow\R^k$ and $f_{\bf u,\sim\bf u}: E\rightarrow\R^k$ are given by
\[ c = \E(Y), \; f_{\bf u}=\E(Y|\Xu)-c, \; f_{\sim\bf u}=\E(Y|\Xtu)-c, \; f_{u,\sim\bf u}=Y-f_{\bf u}-f_{\sim\bf u}-c. \]
Thanks to $L^2$-orthogonality, computing  the covariance matrix of both sides of \eqref{e:hoeff} leads to  
\begin{equation}
\label{e:hoeffvar}
\Sigma = C_{\bf u} + C_{\sim\bf u} + C_{\bf u,\sim\bf u}.
\end{equation}
Here $\Sigma$, $C_{\bf u}$, $C_{\sim\bf u}$ and $C_{\bf u,\sim\bf u}$ are denoting respectively the covariance matrices of $Y$, $f_{\bf u}(\Xu)$, $f_{\sim\bf u}(\Xtu)$ and $f_{\bf u,\sim\bf u}(\Xu,\Xtu)$.

\begin{rmk}
Notice that for scalar outputs (i.e. when $k=1$), the covariance matrices are scalar (variances). So that \eqref{e:hoeffvar} may be  interpreted as the decomposition of the total variance of $Y$.  The summands traduce the fluctuation induced by   the input factors $X_{\bf u}$ and $X_{\sim\bf u}$,  and  the interactions between them. The (univariate) Sobol index $ S^{\bf u}(f) = \Var(\E(Y|X_{\bf u}))/\Var(Y) $
is then interpreted as the sensibility of $Y$ with respect to $X_{\bf u}$. Due to non-commutativity of the matrix product, a direct generalization of this index is not straightforward.
\end{rmk}

\noi\\
In the general case ($k \geq 2$), for any square  matrix $M$ of size $k$, the equation \eqref{e:hoeffvar} can be scalarized in the following way
\[ \Tr(M\Sigma) = \Tr(M \Cu) + \Tr(M \Ctu) + \Tr(M \Cutu). \] 
 This suggests to define as soon as $\Tr(M \Sigma)\neq 0$ the  $M$-sensitivity measure of $Y$  with respect to $X_{\bf u}$ as  
 \[ S^{\bf u}(M; f) = \frac{\Tr(M \Cu)}{\Tr(M \Sigma)} .\]
Of course  we can  analogously define
$$ S^{\sim\bf u}(M; f) = \frac{\Tr(M \Ctu)}{\Tr(M \Sigma)},\;\; S^{\bf u,\sim\bf u}(M; f) = \frac{\Tr(M \Cutu)}{\Tr(M \Sigma)}.$$ 
  The following lemma is obvious.
\begin{lem} $\;$
\begin{enumerate}
\item The generalized sensitivity measures sum up to 1
\begin{equation}
\label{e:somsobol}
S^{\bf u}(M; f) + S^{\sim \bf u}(M; f) + S^{\bf u, \sim \bf u}(M; f) = 1. 
\end{equation}
\item $0 \leq S^{\bf u}(M;f) \leq 1$.
\item Left-composing $f$ by a linear operator $O$ of $\R^k$ changes the sensitivity measure accordingly to
\begin{equation}
\label{e:changesobol}
S^{\bf u}(M; Of) = \frac{\Tr(M O \Cu O^t)}{\Tr(M O \Sigma O^t)} = 
\frac{\Tr(O^t M O \Cu)}{\Tr(O^t M O \Sigma)} = 
S^{\bf u}(O^t M O; f).
\end{equation}
\item For $k=1$ and for any $M \neq 0$, we have $S^{\bf u}(M; f) = S^{\bf u}(f)$.
\end{enumerate}
\end{lem}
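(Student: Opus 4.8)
The lemma collects four elementary facts, so the plan is simply to verify each in turn directly from the definition $S^{\bf u}(M;f) = \Tr(M\Cu)/\Tr(M\Sigma)$ together with the trace-scalarized Hoeffding identity $\Tr(M\Sigma) = \Tr(M\Cu) + \Tr(M\Ctu) + \Tr(M\Cutu)$.

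\textit{Item 1.} Divide the scalarized identity $\Tr(M\Sigma) = \Tr(M\Cu) + \Tr(M\Ctu) + \Tr(M\Cutu)$ through by $\Tr(M\Sigma)$ (which is assumed nonzero); the three summands become exactly $S^{\bf u}(M;f)$, $S^{\sim\bf u}(M;f)$ and $S^{\bf u,\sim\bf u}(M;f)$, so they add to $1$.

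\textit{Item 2.} I would want $\Tr(M\Sigma) \geq \Tr(M\Cu) \geq 0$. This is where I expect the only real content: it is \emph{not} true for a completely arbitrary matrix $M$, so implicitly $M$ must be taken symmetric positive semidefinite (the natural setting, since otherwise $\Tr(M\Sigma)$ need not even be positive). Assuming $M \succeq 0$, write $M = N^t N$; then $\Tr(M\Cu) = \Tr(N \Cu N^t)$, and $N\Cu N^t$ is the covariance matrix of $N f_{\bf u}(\Xu)$, hence positive semidefinite with nonnegative trace. Likewise $\Tr(M\Ctu)\geq 0$ and $\Tr(M\Cutu)\geq 0$, so by \eqref{e:somsobol} each of the three ratios lies in $[0,1]$; in particular $0 \leq S^{\bf u}(M;f)\leq 1$. (If the paper intends $M$ merely symmetric, one restricts to the case $\Tr(M\Sigma)>0$ and the same argument only gives the sum-to-one relation, not the bounds — so I would flag that the bound requires $M\succeq 0$.)

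\textit{Item 3.} Replacing $f$ by $Of$ replaces $Y$ by $OY$ and each Hoeffding summand $g$ by $Og$, hence each covariance matrix $C$ by $OCO^t$ (since $\Cov(Og) = O\,\Cov(g)\,O^t$, using that $O$ is linear and deterministic); in particular $\Sigma \mapsto O\Sigma O^t$ and $\Cu \mapsto O\Cu O^t$. Then the first equality in \eqref{e:changesobol} is the definition, and the second is the cyclicity of the trace, $\Tr(MO C O^t) = \Tr(O^t M O\, C)$, applied to numerator and denominator; recognizing the result as $S^{\bf u}(O^t M O; f)$ finishes it.

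\textit{Item 4.} When $k=1$ all matrices are scalars, $M\neq 0$ cancels from numerator and denominator, and $\Cu$, $\Sigma$ become $\Var(\E(Y|\Xu))$ and $\Var(Y)$, so $S^{\bf u}(M;f) = \Var(\E(Y|\Xu))/\Var(Y) = S^{\bf u}(f)$. The whole proof is routine; the only point needing care — and the one I would treat as the main (mild) obstacle — is pinning down the hypothesis on $M$ under which item 2's inequalities actually hold, namely $M$ symmetric positive semidefinite.
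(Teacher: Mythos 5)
Your verification is correct. The paper offers no proof at all for this lemma --- it is introduced with ``The following lemma is obvious'' --- so there is no argument to compare against; the computations you give (dividing the scalarized Hoeffding identity by $\Tr(M\Sigma)$ for item 1, the transformation rule $C\mapsto OCO^t$ plus cyclicity of the trace for item 3, and the scalar reduction for item 4) are exactly the intended routine checks. Your flag on item 2 is the one substantive point, and it is well taken: for a general (even full-rank symmetric) $M$ the bounds $0\leq S^{\bf u}(M;f)\leq 1$ fail, as the paper's own computation in the proof of Proposition 3.2 shows --- there $S^{\bf u}(M;f)=\lambda_1/(\lambda_1+\lambda_2)$ for diagonal $M$ with entries $\lambda_1,\lambda_2$, which is negative when, say, $\lambda_1<0<\lambda_1+\lambda_2$. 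Your repair (assume $M\succeq 0$, factor $M=N^tN$, and observe that $\Tr(M\Cu)=\Tr(N\Cu N^t)\geq 0$ since $N\Cu N^t$ is the covariance of $Nf_{\bf u}(\Xu)$, and likewise for the other two summands, so that item 1 forces each ratio into $[0,1]$) is the correct minimal hypothesis, and it covers the case the paper actually cares about, $M=\textrm{Id}_k$. So your proof is complete and in fact slightly more careful than the paper's (non)treatment.
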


\subsection{The important identity case}
We now consider the special case $M = \textrm{Id}_k$ (the identity matrix of dimension $k$).  Notice that in this case  the sensitivity indices  are the same as the ones considered through principal component analysis in \cite{lamboni2011multivariate}. We set $S^{\bf u}(f)=S^{\bf u}(\textrm{Id}_k;f)$. The index $S^{\bf u}(f)$ has the following obvious properties

\begin{proposition}
\label{prop:properties}

\begin{enumerate}
\item $S^{\bf u}(f)$ is invariant by left-composition of $f$ by any isometry of $\R^k$ i.e.
\[  \textrm{  for any  square  matrix  $O$  of size $k$  s.t. } O^tO=\textrm{Id}_k, \;\; S^{\bf u}(Of)=S^{\bf u}(f); \]
\item $S^{\bf u}(f)$ is invariant by left-composition of $f$ by any nonzero scaling of i.e.
\[  \textrm{  for any }   \lambda \in \R, \;\; S^{\bf u}(\lambda f)=S^{\bf u}(f); \]
\end{enumerate}
\end{proposition}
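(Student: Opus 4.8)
The plan is to derive both statements directly from part 3 of the preceding lemma, i.e. identity \eqref{e:changesobol}, together with the cyclic invariance of the trace; no new ingredient is needed.

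For the first claim, I would start from the definition $S^{\bf u}(f) = S^{\bf u}(\textrm{Id}_k; f)$ and apply \eqref{e:changesobol} with $M = \textrm{Id}_k$ and $O$ any isometry of $\R^k$. This gives $S^{\bf u}(Of) = S^{\bf u}(O^t\,\textrm{Id}_k\,O; f) = S^{\bf u}(O^tO; f)$, and since $O^tO = \textrm{Id}_k$ the right-hand side is exactly $S^{\bf u}(\textrm{Id}_k; f) = S^{\bf u}(f)$. Equivalently, one can argue without invoking the lemma: left-composing by $O$ replaces the covariance matrices $\Cu$ and $\Sigma$ by $O\Cu O^t$ and $O\Sigma O^t$, and $\Tr(O\Cu O^t) = \Tr(O^tO\,\Cu) = \Tr(\Cu)$, likewise $\Tr(O\Sigma O^t) = \Tr(\Sigma)$, so the ratio $\Tr(\Cu)/\Tr(\Sigma)$ is unchanged.

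For the second claim, scaling $f$ by $\lambda$ is left-composition by $O = \lambda\,\textrm{Id}_k$, so \eqref{e:changesobol} yields $S^{\bf u}(\lambda f) = S^{\bf u}(\lambda^2\,\textrm{Id}_k; f) = \Tr(\lambda^2\Cu)/\Tr(\lambda^2\Sigma) = \Tr(\Cu)/\Tr(\Sigma) = S^{\bf u}(f)$, the $\lambda^2$ factors cancelling as soon as $\lambda \neq 0$. The only point requiring a word of care is this non-degeneracy: one needs $\lambda \neq 0$ (more generally $\Tr(M\Sigma) \neq 0$) for the defining ratio to make sense. Beyond that there is no real obstacle — the entire content is the cyclic invariance of the trace, which is precisely what singles out $M = \textrm{Id}_k$ as the natural normalization.
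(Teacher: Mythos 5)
Your proposal is correct and follows exactly the route the paper intends: the paper states these properties as ``obvious'' consequences of item 3 of the preceding lemma (identity \eqref{e:changesobol}) together with cyclic invariance of the trace, which is precisely your argument. Your remark that $\lambda\neq 0$ is needed is a welcome precision, since the statement's displayed formula omits this hypothesis even though the item's wording (``nonzero scaling'') requires it.
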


\begin{rmk}
The properties in this proposition  are  natural requirements for a sensitivity measure. In the next section, we will show that these requirements can be fulfilled by $S^{\bf u}(M;f)$ only when $M=\lambda \textrm{Id}_k$ ( $\lambda\in\R^*$). Hence, the  \textit{canonical choice}  among indices of the form  $S^{\bf u}(M;f)$ is the sensitivity index  $S^{\bf u}(f)$.
\end{rmk}

\subsection{Identity is the only good choice}
The following proposition can be seen as a kind of reciprocal  of Proposition \ref{prop:properties}.

\begin{proposition}\label{prop:inv}
Let $M$ be a square  matrix of size $k$ such that
\begin{enumerate}
\item $M$ does not depend  neither on $f$ nor $\bf u$;
\item $M$ has full rank;
\item $S^{\bf u}(M;f)$ is invariant by left-composition of $f$ by any isometry of $\R^k$.
\end{enumerate}
Then $S^{\bf u}(M;\cdot) = S^{\bf u}(\cdot)$.
\end{proposition}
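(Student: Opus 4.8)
Throughout I write $\widetilde M:=(M+M^t)/2$ for the symmetric part of $M$. The plan is to use the transformation rule \eqref{e:changesobol} to turn the invariance hypothesis into a constraint on $M$ that can be read off from rank‑one test functions, and then to recognise that constraint as the statement that $\widetilde M$ commutes with every rotation. Since $\Cu$ and $\Sigma$ are symmetric, $\Tr(M\Cu)$ and $\Tr(M\Sigma)$ depend on $M$ only through $\widetilde M$; and since $S^{\bf u}(M;\cdot)$ is assumed to be a genuine sensitivity measure (so that $\Tr(M\Sigma)\neq 0$ for at least one admissible $f$), we must have $\widetilde M\neq 0$. It therefore suffices to prove that $\widetilde M=\lambda\,\textrm{Id}_k$ for some $\lambda\neq 0$: then, using $\Tr(\Sigma)>0$ from the standing positive‑definiteness assumption, $S^{\bf u}(M;f)=\Tr(\widetilde M\Cu)/\Tr(\widetilde M\Sigma)=\Tr(\Cu)/\Tr(\Sigma)=S^{\bf u}(f)$ for every admissible $f$.

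By the third point of the Lemma, hypothesis (3) is equivalent to $S^{\bf u}(O^tMO;f)=S^{\bf u}(M;f)$ for every isometry $O$ (i.e. $O^tO=\textrm{Id}_k$) and every admissible $f$. I would feed in the test functions $f_{v,w}(X):=\phi(X_i)\,v+\psi(X_j)\,w$, where $i$ is chosen in $\bf u$ and $j$ in its complement with $X_i,X_j$ non‑degenerate (the other case being empty), $\phi,\psi$ are scalar with $\E\phi(X_i)=\E\psi(X_j)=0$ and $\Var\phi(X_i)=\Var\psi(X_j)=1$, and $v,w\in\R^k$ are arbitrary. The Hoeffding decomposition of $f_{v,w}$ with respect to $\bf u$ has vanishing interaction term, so $\Cu=vv^t$, $\Ctu=ww^t$, $\Sigma=vv^t+ww^t$, and hence $S^{\bf u}(M;f_{v,w})=(v^tMv)/(v^tMv+w^tMw)$. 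Writing $N:=O^tMO$ and equating $S^{\bf u}(M;f_{v,w})=S^{\bf u}(N;f_{v,w})$ wherever the denominators do not vanish, then cross‑multiplying and simplifying, one is left with the identity $(v^t\widetilde M v)(w^t\widetilde N w)=(w^t\widetilde M w)(v^t\widetilde N v)$ with $\widetilde N=O^t\widetilde M O$; being a polynomial identity in $(v,w)$ valid on a dense open set, it holds for all $v,w\in\R^k$.

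To finish, I would fix $w=w_0$ with $w_0^t\widetilde M w_0\neq 0$ (possible since $\widetilde M\neq 0$), which gives $v^t(O^t\widetilde M O)v=\mu(O)\,v^t\widetilde M v$ for all $v$, hence $O^t\widetilde M O=\mu(O)\,\widetilde M$ with $\mu(O)\in\R^*$, a symmetric matrix being determined by its quadratic form. Because $\widetilde M\neq 0$, the map $\mu:O(k)\to\R^*$ is a group homomorphism, and it is continuous since $\mu(O)=(w_0^tO^t\widetilde M Ow_0)/(w_0^t\widetilde M w_0)$ is polynomial in $O$; restricted to the compact connected group $SO(k)$, its image is a compact connected subgroup of $\R^*$, hence $\{1\}$. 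Thus $O^t\widetilde M O=\widetilde M$ for all $O\in SO(k)$, i.e. $\widetilde M$ commutes with every rotation. Testing this against the $\pi/2$‑rotations in the coordinate $2$‑planes forces every off‑diagonal entry of $\widetilde M$ to vanish and all its diagonal entries to coincide (the case $k=1$ being trivial), so $\widetilde M=\lambda\,\textrm{Id}_k$, with $\lambda\neq 0$ since $\widetilde M\neq 0$, which completes the argument.

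I expect the delicate points to be: the implication $O^t\widetilde M O=\mu(O)\widetilde M\Rightarrow\mu\equiv 1$ on $SO(k)$, which genuinely uses continuity of $\mu$ together with connectedness of $SO(k)$ (for $k=2$, where $SO(2)$ is abelian, one cannot replace this by a perfectness argument), and the bookkeeping around the test functions together with the reduction to $\widetilde M$ — note in particular that the full‑rank hypothesis alone does not pin $M$ down (for $k$ even a skew full‑rank $M$ has $\widetilde M=0$), so the hypothesis must be understood as forcing $\widetilde M\neq 0$. The purely linear‑algebraic endgame, namely that a symmetric matrix fixed by conjugation by all rotations is scalar, is elementary once it is reached.
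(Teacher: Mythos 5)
Your proof is correct, and after the common first step (replacing $M$ by its symmetric part $\widetilde M$, justified exactly as in the paper by $\Tr(M_{Antisym}V)=0$ for symmetric $V$) it diverges genuinely from the paper's argument. The paper diagonalizes $\widetilde M=PDP^t$, uses \eqref{e:changesobol} to reduce to $M$ diagonal, and then kills the possibility of two distinct diagonal entries with a single concrete model ($f=\textrm{Id}_2$, ${\bf u}=\{1\}$, $O$ the coordinate swap); this is shorter and entirely elementary, at the price of the slightly informal reduction ``it is clearly sufficient to consider the case $k=2$''. You instead extract from rank-one test models the functional equation $O^t\widetilde MO=\mu(O)\widetilde M$ for all $O$, identify $\mu$ as a continuous character of the connected compact group $SO(k)$ (hence trivial), and conclude that $\widetilde M$ lies in the commutant of $SO(k)$ among symmetric matrices, i.e.\ is scalar. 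This is heavier machinery but buys two things the paper's proof does not make explicit: a structural explanation of \emph{why} only scalar matrices survive (invariance forces commutation with every rotation), and the correct observation that the full-rank hypothesis alone does not rule out $\widetilde M=0$ when $k$ is even, so that the nondegeneracy really comes from requiring $\Tr(M\Sigma)\neq0$ for some admissible $f$ --- a point the paper passes over. One shared informality worth noting: your test outputs $f_{v,w}$ have $\Sigma=vv^t+ww^t$ of rank $\le 2$, which conflicts with the paper's standing (``without loss of generality'') assumption that $\Sigma$ is positive definite for $k>2$; this is harmless (perturb by independent noise in the remaining directions and pass to the limit by continuity of both sides in $\Sigma$ and $\Cu$), and the paper's own $k=2$ reduction suffers from the same issue when read literally in dimension $k$.
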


\begin{proof} We can write $M = M_{Sym} + M_{Antisym}$ where $M_{Sym}^t=M_{Sym}$ and $M_{Antisym}^t=-M_{Antisym}$. Since, for any symmetric matrix $V$, we have $\Tr(M_{Antisym} V)=0$, we deduce that $S^{\bf u}(M;f)=S^{\bf u}(M_{Sym};f)$ ($C_u$ and $\Sigma$ being symmetric matrices). Thus we  assume, without loss of generality, that $M$ is symmetric.\\

\noi
We diagonalize $M$ in an orthonormal basis: $M=P D P^t$, where $P^t P = \textrm{Id}_k$ and $D$ diagonal. We have
\[ S^{\bf u}(M;f) = \frac{\Tr(P D P^t \Cu)}{\Tr(P D P^t \Sigma)} = \frac{\Tr(D P^t \Cu P)}{\Tr(D P^t \Sigma P)} = S^{\bf u}(D; P^tf). \]
By assumption  1. and 3., $M$ can be assumed to be diagonal.\\

\noi
Now we want to show that $M = \lambda \textrm{Id}_k$ for some $\lambda\in\R^*$. Suppose, by contradiction, that $M$ has two different diagonal coefficients $\lambda_1 \neq \lambda_2$. It is clearly sufficient to consider the case $k=2$. Choose $f=\textrm{Id}_2$ (hence, $p=2$), and ${\bf u}=\{1\}$. We have $\Sigma=\textrm{Id}_2$ and $\Cu=\left(\begin{smallmatrix} 1 & 0 \\ 0 & 0 \end{smallmatrix}\right)$. Hence on one hand $S^{\bf u}(M;f)=\frac{\lambda_1}{\lambda_1+\lambda_2}$. On the other hand, let $O$ be the isometry which exchanges the two vectors of the canonical basis of $\R^2$. We have  $S^{\bf u}(M;Of)=\frac{\lambda_2}{\lambda_1+\lambda_2}$. Thus 3. is contradicted if $\lambda_1\neq\lambda_2$. The case $\lambda=0$ is forbidden by 2. Finally, it is easy to check that, for any $\lambda\in\R^*$, $S^{\bf u}(\lambda \textrm{Id}_k;\cdot)=S^{\bf u}(\textrm{Id}_k;\cdot)=S^{\bf u}(\cdot)$. 
\end{proof}

\begin{rmk}\label{rem:min_pb}(Variational formulation)
We assume here that $\E(Y)=0$, if it is not the case, one has to consider the centered variable $Y-\E(Y)$. As in dimension 1, one can see that this new index can also be seen as the solution of the following least-squares problem (see \cite{janon2012asymptotic}) 
\begin{equation*}
\argmin_{a}\E\|Y^{\bf u}-aY\|^{2}.
\end{equation*}

As a consequence, $S^{\bf u}(f)Y$ can be seen as the projection of $Y^{\bf u}$ on $\{aY, a\in \R\}$.
\end{rmk}
 
\begin{rmk}
Notice that the condition $\Tr(\Sigma)\neq 0$ (necessary for the indices to be well-defined) is fulfilled as soon as $Y$ is not constant.
\end{rmk}

 \noi
We now give two toy examples to illustrate our definition.

\begin{ex}
We consider as first example  $$Y=f^{a}(X_{1},X_{2})=\begin{pmatrix}
aX_{1}\\
X_{2}
\end{pmatrix},$$
with $X_{1}$ and $X_{2}$ i.i.d. standard Gaussian random variables. We easily get
$$S^{\bf 1}(f)=\frac{a}{a^2+1} \quad \trm{and} \quad S^{\bf 2}(f)=\frac{1}{a^2+1}=1-S^{\bf 1}(f).$$ 
\end{ex}

\begin{ex} We consider Example \ref{ex0}
\label{ex2}
$$Y=f^{a,b}(X_{1},X_{2})=\begin{pmatrix}
X_{1}+X_{1}X_{2}+X_{2}\\
aX_{1}+bX_{1}X_{2}+X_{2}
\end{pmatrix}.$$ 

\noi
We have
$$S^{\bf 1}(f)=\frac{1+a^2}{4+a^2+b^2} \quad \trm{and} \quad  
S^{\bf 2}(f)=\frac{2}{4+a^2+b^2}$$
and obviously 
$$S^{\bf 1}(f)\geq S^{\bf 2}(f) \iff a^2\geq 1.$$
This result has the natural interpretation that, as $X_1$ is scaled by $a$, it has more influence if and only if this scaling enlarges $X_1$'s support i.e. $|a|>1$.
\end{ex}

\section{About uniqueness}\label{sec:uniqueness}
In this section, we show that it is possible to build other indices having the same invariance properties as $S^{\bf u}(f)$. 

\subsection{Another index}
 
Here we use \eqref{e:hoeffvar} in a different way to get a natural definition of a Sobol matricial index with respect to the variable $X_{\bf u}$. Indeed, we may choose as Sobol matricial index

\begin{equation}\label{eq:cond_AB}
BC_{\bf u}A 
\end{equation}
for any matrices $A$ and $B$ such that $AB=\Sigma^{-1}.$ 	
First, note that this index is a square matrix of size $k$. Second, any convex combination of Sobol matricial indices of the form \eqref{eq:cond_AB} is still a good candidate for the Sobol matricial index with respect to $X_{\bf u}$.

\noi\\
\begin{rmk}(Another variational formulation)
In the spirit of Remark \ref{rem:min_pb} (with the same assumption $\E(Y)=0$), consider the following minimization problem
\begin{equation}
\argmin_{M\in \mathcal{M}_k}\E\|P^{t}Y^{\bf u}-MP^{t}Y\|^{2},
\end{equation}
where $P$ is a matrix such that $P^{t} \Sigma P$ is diagonal and $\mathcal{M}_k$ is the set of all square matrices of size $k$. The solution
of this minimization problem 
$$
P^{t}C_{\bf u} \Sigma^{-1}P
$$
is a  good candidate to be a Sobol matricial index (with $A=\Sigma^{-1}P$ and $B=P^{t}$).

\noi
Note now that the symmetric version of this Sobol matricial index 
$$
P^{t}\Sigma^{-1}C_{\bf u} P
$$
is also a good candidate (here $A=P$ and $B=P^{t}\Sigma^{-1}$).
\end{rmk}

\noi
In order to warrant  that a Sobol matricial index fulfills a consistent definition, we should require a little bit more.
First, a reasonable condition should be that the Sobol matricial index is a symmetric matrix: the influence of the input $X_i$  on the coordinates $k$ and $l$ of the output $Y$ should be the same as the influence of the input $X_i$  on the coordinates $l$ and $k$ of $Y$.\\
Secondly, the Sobol matricial index should share the  properties of  the scalar index $S^{\bf u}(f)$. That is, it should be invariant by any isometry, scaling and translation. This leads to the definition of a family of matricial indices.\\

For the sake of simplicity, we assume that the eigenvalues of $\Sigma$ are simple. Let $0<\lambda_1< \ldots < \lambda_k$ be the ordered eigenvalues and let $(O_i)_{i=1,\ldots,k}$ be  such that $O_i $ is the unit eigenvector associated to $\lambda_i$ whose first non-zero coordinate is positive. Let $O$  be the (orthogonal) matrix whose column $i$ is $O_i$.\\
Let $\mathcal{H}_{k}$ be the group of signed permutations matrix of size $k$.  That is, $P\in \mathcal{H}_{k}$ if and only if each row and each column of $P$ has exactly one non zero element, which belongs to $\{-1,1\}$. 

Notice that any orthogonal matrix that diagonalizes $\Sigma$ can be written as $OP$, where
$P \in \mathcal H_k $. Suppose that $\mu$ is a probability measure on $\mathcal H_k$. We define

\begin{align}\label{eq:T}
T^{\bf u,\mu}&=
\frac12\left(\int_{ \mathcal{H}_k}(OP)^{t} \left(\Sigma^{-1}C_{\bf u}+C_{\bf u}\Sigma^{-1}\right)OP\mu(dP)\right).
\end{align}

We then have the following Proposition.

\begin{proposition}
$T^{\bf u,\mu}$ is invariant by any isometry, scaling and translation.
\end{proposition}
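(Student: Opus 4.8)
The plan is to verify the three invariances (isometry, scaling, translation) directly from the definition \eqref{eq:T}, reducing each to how the relevant matrices $\Sigma$, $C_{\bf u}$ and the diagonalizing matrix $O$ transform under left-composition of $f$ by an affine map of $\R^k$. Translation is immediate: replacing $f$ by $f+v$ for fixed $v\in\R^k$ leaves $Y-\E(Y)$ unchanged, hence leaves $\Sigma$ and $C_{\bf u}$ (and therefore $O$, up to the normalization convention on eigenvectors) unchanged, so $T^{\bf u,\mu}$ is unchanged. Scaling by $\lambda\in\R^*$ sends $\Sigma\mapsto\lambda^2\Sigma$ and $C_{\bf u}\mapsto\lambda^2 C_{\bf u}$, so $\Sigma^{-1}C_{\bf u}+C_{\bf u}\Sigma^{-1}$ is invariant; the eigenvalues of $\Sigma$ scale by $\lambda^2$ but the eigenvectors, hence $O$, are unchanged, so again $T^{\bf u,\mu}$ is unchanged.

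The isometry case is the substantive one. Let $Q$ be orthogonal, $Q^tQ=\mathrm{Id}_k$, and consider $\tilde f=Qf$. Then $\tilde\Sigma=Q\Sigma Q^t$ and $\tilde C_{\bf u}=QC_{\bf u}Q^t$ by \eqref{e:changesobol}-type computations, so $\tilde\Sigma^{-1}\tilde C_{\bf u}+\tilde C_{\bf u}\tilde\Sigma^{-1}=Q(\Sigma^{-1}C_{\bf u}+C_{\bf u}\Sigma^{-1})Q^t$. The eigenvalues of $\tilde\Sigma$ coincide with those of $\Sigma$ (so the assumption of simple eigenvalues is preserved), and a unit eigenvector of $\tilde\Sigma$ for $\lambda_i$ is $\pm QO_i$; hence the canonically-normalized diagonalizing matrix $\tilde O$ satisfies $\tilde O = QO E$ for some $E\in\mathcal H_k$ that only flips signs of columns (a diagonal signed-permutation matrix). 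Substituting into \eqref{eq:T}, the integrand becomes $(QO E P)^t\,Q(\Sigma^{-1}C_{\bf u}+C_{\bf u}\Sigma^{-1})Q^t\,(QO E P)=(OEP)^t(\Sigma^{-1}C_{\bf u}+C_{\bf u}\Sigma^{-1})(OEP)$, because $Q^tQ=\mathrm{Id}_k$. Thus $T^{\bf u,\mu}$ for $\tilde f$ equals $\tfrac12\int (OEP)^t(\cdots)(OEP)\,\mu(dP)$, which by the change of variable $P\mapsto EP$ (a bijection of $\mathcal H_k$, since $E\in\mathcal H_k$) equals $\tfrac12\int (OP)^t(\cdots)(OP)\,(E_*\mu)(dP)$.

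Here lies the one genuine subtlety, and the step I expect to be the main obstacle: the pushforward measure $E_*\mu$ need not equal $\mu$, so naively $T^{\bf u,\mu}(\tilde f)=T^{\bf u,E_*\mu}(f)$ rather than $T^{\bf u,\mu}(f)$. Two remedies are available and I would present whichever the authors intend. Either (i) one restricts attention to $\mu$ invariant under left multiplication by the diagonal sign-flip subgroup of $\mathcal H_k$ (in particular the uniform measure on $\mathcal H_k$ works), in which case $E_*\mu=\mu$ and we are done; or (ii) one observes that the sign ambiguity $E$ is in fact killed because each column of $OEP$ differs from the corresponding column of $OP$ only by a global sign, and the integrand $(OP)^t(\Sigma^{-1}C_{\bf u}+C_{\bf u}\Sigma^{-1})(OP)$ is a quadratic (bilinear, degree-two) expression in the columns of $OP$, hence insensitive to simultaneous sign flips of whole columns — so the $E$ disappears before any change of variables is needed. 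Option (ii) is cleaner and is presumably why the eigenvectors were normalized by a sign convention in the first place; with it, $T^{\bf u,\mu}(\tilde f)=T^{\bf u,\mu}(f)$ for every $\mu$, completing the proof. I would also remark in passing that $T^{\bf u,\mu}$ is symmetric, since $\Sigma^{-1}C_{\bf u}+C_{\bf u}\Sigma^{-1}$ is symmetric and conjugation by $OP$ and averaging preserve symmetry, so the family indeed meets the requirements laid out before the statement.
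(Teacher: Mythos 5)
Your computational core coincides with the paper's: the authors set $W=UY$, note $\Sigma_W=U\Sigma U^t$, $C_{{\bf u},W}=UC_{\bf u}U^t$, take $O_W=UO$ as a diagonalizer of $\Sigma_W$, and cancel $U^tU$ inside $(O_WP)^t\Sigma_W^{-1}C_{{\bf u},W}(O_WP)$ and its transpose-ordered companion before integrating against $\mu$; translation and scaling are dismissed as obvious, exactly as you do. Where you go beyond the paper is in noticing that the definition of $T^{{\bf u},\mu}$ pins down a \emph{canonical} $O$ (unit eigenvectors with positive first non-zero coordinate), so the canonical diagonalizer of $\Sigma_W$ is not $UO$ but $UOE$ for some diagonal sign matrix $E$ depending on $U$. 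The paper's proof silently substitutes $UO$ and never confronts this; your observation that one is really comparing $T^{{\bf u},\mu}(f)$ with $T^{{\bf u},E_*\mu}(f)$ is a genuine point, and your remedy (i) is the correct resolution: invariance holds when $\mu$ is invariant under left multiplication by the sign-flip subgroup, in particular for the uniform measure $\mu^\star$, which is the only measure actually used in the sequel.

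However, your preferred remedy (ii) is wrong, and you should not rest the proof on it. Writing $A=\Sigma^{-1}C_{\bf u}+C_{\bf u}\Sigma^{-1}$ and $M=OP$, $M'=OEP$, the columns satisfy $M'_{\cdot j}=\epsilon_j M_{\cdot j}$ with \emph{independent} signs $\epsilon_j$, so $\bigl((M')^tAM'\bigr)_{ij}=\epsilon_i\epsilon_j\,(M^tAM)_{ij}$: the diagonal entries are indeed insensitive to the flips, but the off-diagonal entries change sign whenever $\epsilon_i\epsilon_j=-1$, and $O^tAO$ is generically not diagonal (it equals $D^{-1}O^tC_{\bf u}O+O^tC_{\bf u}OD^{-1}$ with $O^tC_{\bf u}O$ non-diagonal in general). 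Taking $\mu$ to be a Dirac mass at the identity and a $Q$ forcing $E\neq\pm\mathrm{Id}_k$ gives a concrete counterexample to the claim that ``the $E$ disappears before any change of variables.'' So the statement, read for arbitrary $\mu$ with the canonical $O$, actually requires the invariance hypothesis of your option (i); present (i), drop (ii), and note that the off-diagonal sign ambiguity is precisely what the averaging over $\mathcal H_k$ in the choice $\mu=\mu^\star$ is there to kill. Your closing remark that $T^{{\bf u},\mu}$ is symmetric is correct and worth keeping.
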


\begin{proof}
Let $U$ be an isometry of $\R^{k}$ and set
 $$
W=UY, \ W^{\bf u}= UY^{\bf u}.
$$
It is clear that  $\Sigma_W=U\Sigma U^{t}$ and $C_{\bf{u},W}=\Cov(W,W^{\bf u})=U C_{\bf u} U^t$. Since $O$ diagonalizes $\Sigma$,  $O_W=UO$ diagonalizes the covariance matrix $\Sigma_W$ of $W$.  Then, for any $P\in \mathcal{H}_k$,
\begin{align*}
(O_WP)^{t} \Sigma_{W}^{-1}C_{\bf{u},W}(O_WP)&=P^tO^{t} U^{t}U\Sigma_{Y}^{-1}U^{t}UC_{\bf u}U^{t}UOP=(OP)^{t} \Sigma_{Y}^{-1}C_{\bf u}OP\\
(O_WP)^{t}C_{\bf u,W} \Sigma_{W}^{-1}(O_WP)&=P^{t}O^t U^{t}U C_{\bf u}  U^{t}U\Sigma_{Y}^{-1}U^{t}UOP=(OP)^{t} C_{\bf u}\Sigma_{Y}^{-1}OP
\end{align*}
By integrating the above equalities with respect to $\mu$, we obtain the invariance by isometry. The other invariances are obvious.
\end{proof}

\begin{rmk}
At first look, one may also consider matricial indices based on 
$$\Sigma^{-\alpha}C_{\bf u}\Sigma^{-\beta}+\Sigma^{-\beta}C_{\bf u}\Sigma^{-\alpha}, \quad \trm{with} \quad \alpha+\beta=1.$$
Nevertheless, these matricial indices are not admissible even if $\Sigma^{-\alpha}\Sigma^{-\beta}=\Sigma^{-1}$ (see \eqref{eq:cond_AB}) since they are not invariant by isometry.\end{rmk}

\begin{rmk}
For the sake of simplicity, we have restricted ourselves to the generic case where all eigenvalues of $\Sigma$ are simple. When there is only $l$ ($l<k$) distinct eigenvalues, the group $\mathcal{H}_k$  has to be replaced by the much more complicated set of all isomorphisms $P$ on 
$\R^k$ that can be written as
$$P=\Pi O_1\ldots O_l,$$
where $\Pi$ is some permutation on $\{1,\ldots,k\}$ and $O_i$ ($i=1,\ldots, l$) is some isometry on $\R^k$ letting invariant the orthogonal of the eigenspace  associated with the $i^{\text{th}}$ eigenvalue. 
\end{rmk}
Let $\mu^\star$ be the uniform probability measure on the finite set $\mathcal H_k$.

\begin{lem}
Let $A$ be a square matrix of size $k$. Then 
$$
\int_{\mathcal H_{k}}P^{t} A P\mu^\star(dP)=\frac{\Tr(A)}{k}I_k.
$$
\end{lem}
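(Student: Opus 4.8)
The plan is to exploit the symmetry of the uniform (Haar) measure $\mu^\star$ on the finite group $\mathcal H_k$. Write $L(A) := \int_{\mathcal H_k} P^t A P\, \mu^\star(dP)$. First I would observe that $L$ is linear in $A$, so it suffices to compute it on a spanning set of $\mathcal M_k$; I will use the elementary matrices $E_{ij}$ (with a $1$ in position $(i,j)$ and $0$ elsewhere). The key structural fact is that for any fixed $Q\in\mathcal H_k$, left or right translation permutes $\mathcal H_k$ and preserves $\mu^\star$, hence $Q^t L(A) Q = L(A)$ for all $Q\in\mathcal H_k$; that is, $L(A)$ commutes (up to signed conjugation) with every element of $\mathcal H_k$.

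Next I would identify which matrices are fixed by signed conjugation by all of $\mathcal H_k$. Conjugating by the signed permutation that flips the sign of the $i$-th coordinate sends $E_{ij}$ with $i\neq j$ to $-E_{ij}$, so all off-diagonal entries of $L(A)$ must vanish; conjugating by a plain transposition $(i\,j)$ swaps the $i$-th and $j$-th diagonal entries, so all diagonal entries of $L(A)$ are equal. Hence $L(A)=\gamma(A)\, I_k$ for some scalar $\gamma(A)$ depending linearly on $A$. Finally, taking the trace of both sides of $L(A)=\gamma(A) I_k$ and using $\Tr(P^t A P)=\Tr(A)$ for every orthogonal $P$ together with $\mu^\star(\mathcal H_k)=1$, I get $\Tr(A)=\gamma(A)\,k$, so $\gamma(A)=\Tr(A)/k$, which is the claim.

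I do not expect any serious obstacle here: the only mild point of care is justifying the invariance $Q^tL(A)Q=L(A)$, which follows because $P\mapsto QP$ is a bijection of the finite group $\mathcal H_k$ and $\mu^\star$ is the uniform measure, so the change of variables leaves the integral unchanged — and then reading off that the only signed-conjugation invariants are scalar multiples of the identity. Alternatively one can avoid the representation-theoretic phrasing entirely and just compute the $(m,n)$ entry of $L(E_{ij})$ directly as an average over signed permutations, noting that the signs cancel unless $i=j$ and $m=n$; I would present whichever is shorter, but the symmetry argument above is cleanest.
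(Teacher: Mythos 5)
Your argument is correct, but it takes a genuinely different route from the paper. The paper proves the lemma by direct computation: it writes each element of $\mathcal H_k$ as $D_\epsilon P_\sigma$, first sums over the signs $\epsilon$ to show that $\sum_\epsilon D_\epsilon^t A D_\epsilon$ is $2^k\mathop{\mathrm{diag}}(A_{11},\ldots,A_{kk})$ (the off-diagonal terms cancel because $\sum_\epsilon \epsilon_i\epsilon_j=0$ for $i\neq j$), and then averages this diagonal matrix over the permutations $\sigma$ to obtain $\frac{\Tr(A)}{k}I_k$ entrywise. You instead never expand the integral: you use the right-invariance of the uniform measure on the finite group $\mathcal H_k$ to get $Q^tL(A)Q=L(A)$ for all $Q\in\mathcal H_k$, identify the commutant (sign flips kill the off-diagonal entries, transpositions equalize the diagonal) to conclude $L(A)=\gamma(A)I_k$, and pin down $\gamma(A)$ by taking traces and using $\Tr(P^tAP)=\Tr(A)$. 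Both proofs ultimately exploit the same two symmetries, but yours is the Haar-averaging/Schur-lemma version: it is shorter, avoids the explicit double sum over $(\epsilon,\sigma)$, and generalizes immediately to averaging over any compact group of orthogonal matrices acting irreducibly on $\R^k$; the paper's computation is more elementary and self-contained, requiring no invariance argument. The only point you should make fully explicit in a write-up is the change of variables $P\mapsto PQ$ giving $Q^tL(A)Q=L(A)$, which you have correctly identified as the one step needing justification. Your fallback suggestion (computing the entries of $L(E_{ij})$ directly as an average over signed permutations) is essentially the paper's proof.
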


\begin{proof}
One can see that $\mathcal H_{k}=\left\{D_{\epsilon}P_{\sigma}; \epsilon \in \{-1,1\}^k \; \trm{and} \; \sigma \; \trm{a permutation of} \; \{1,\ldots,k\} \right\}$ where $D_{\epsilon}=\mathop{\mathrm{diag}}(\epsilon)$ and $P_{\sigma}$ the permutation matrix associated to $\sigma$ that is $\left(P_\sigma\right)_{i,j}=\left\{ \begin{array}{l} 1 \text{ if } j=\sigma(i) \\ 0 \text{ else.} \end{array} \right.$\\ 

Set $P_{\epsilon,\sigma}=D_{\epsilon}P_{\sigma}$ the element  of $\mathcal{H}_k$ associated to    $\epsilon$ and $\sigma$.

Then
\beq
B&:=&\int_{\mathcal P_{k}}P_{\epsilon,\sigma}^{t} A P_{\epsilon,\sigma}\mu^\star(dP_{\epsilon,\sigma})=\frac{1}{2^k k!}\sum_{\sigma}P_{\sigma}^{t}\left(\sum_{\epsilon}  D_{\epsilon}^t A D_{\epsilon}\right)P_{\sigma}\\
\eeq
We have $\left(D_{\epsilon}^t A D_{\epsilon}\right)_{ij}=\epsilon_i A_{ij} \epsilon_j$, hence 
\beq
\left(\sum_{\epsilon}  D_{\epsilon}^t A D_{\epsilon}\right)_{ij}&=&A_{ij}\sum_{\epsilon}  \epsilon_i \epsilon_j
=\begin{cases}
0 & \trm{if} \ i\neq j\\
2^kA_{ii} & \trm{if} \ i=j.\\
\end{cases}
\eeq 
Thus
\begin{eqnarray*}
B_{ij}&=&\frac{1}{k!} \sum_{\sigma} \sum_{l=1}^k (P_{\sigma})_{li} \left( \mathop{\mathrm{diag}}(A_{11},\ldots,A_{kk})P_{\sigma} \right)_{lj} \\
&=& \frac{1}{k!} \sum_\sigma (P_\sigma)_{\sigma^{-1}(i),i} \left( \mathop{\mathrm{diag}}(A_{11},\ldots,A_{kk})P_{\sigma} \right)_{\sigma^{-1}(i),j} \\
&=& \begin{cases} 0 & \trm{ if } i\neq j \\
               \frac{1}{k!} \sum_\sigma A_{\sigma^{-1}(i),\sigma^{-1}(i)}(i) & \trm{ if } i=j \end{cases}
\end{eqnarray*}
and we have
\[ B_{ii} = \frac{1}{k!} \sum_{l=1}^k \left( \sum_{\sigma\trm{ s.t. } \sigma^{-1}(i)=l} A_{ll} \right) = \frac{1}{k!} \sum_{l=1}^k (k-1)! A_{ll} = \frac{\Tr(A)}{k}. \;\;\;\;\;\qedhere \]
\end{proof}

\noi
Using the previous Lemma in conjunction with \eqref{eq:T}, we obtain the following Sobol matricial index 
\begin{equation} 
T^{\bf u}:=T^{\bf u,\mu^{\star}}= \frac{\Tr\left(\Sigma^{-1}C_{\bf u}\right)}{k}I_k. 
 \end{equation}
Notice that this matricial index only depends on the real number $\Tr\left(\Sigma^{-1}C_{\bf u}\right)/\Tr(I_k)$ which is easy to interpret. \\

\subsection{Comparison between $S^{\bf u}(f)$ and $T^{\bf u}$}

We have defined two nice candidates to generalize the scalar Sobol index in dimension $k$. A natural question is:
which one should be preferred? There is \emph{a priori} no universal answer.\\ 
Nevertheless, from a statistical point of view, $T^{\bf u}$ presents a major drawback: its estimation may require the estimation of an inverse covariance matrix $\Sigma^{-1}$, which may be tricky. While the estimation of $S^{\bf u}(f)$ only uses estimation of traces of covariance matrices. Besides, the following example  shows that $T^{\bf u}$ may be useless in some models.

\begin{ex}
We consider again the model of Example \ref{ex2}
$$Y=f^{a,b}(X_{1},X_{2})=\begin{pmatrix}
X_{1}+X_{1}X_{2}+X_{2}\\
aX_{1}+bX_{1}X_{2}+X_{2}
\end{pmatrix}.$$ 
We easily get

$$T^{ \bf 1}=\frac{(b-a)^2+(a-1)^2}{4[(b-a)^2+(a-1)(b-1)]}I_2,
\;\; \; \; \; T^{\bf 2}=\frac{(b-1)^2+(a-1)^2}{4[(b-a)^2+(a-1)(b-1)]}I_2. $$

\noi Thus
$$T^{ \bf 1}\geq T^{ \bf 2} \iff (a-1)(a-2b+1)\geq 0$$
whereas we have obtained previously, the more intuitive result
$$S^{\bf 1}(f)\geq S^{\bf 2}(f) \iff a^2\geq 1.$$

Moreover $T^{\bf u}$ is not informative since for $a=1$, the indices $T^{\bf 1}$ and $T^{\bf 2}$ satisfy
$$T^{ \bf 1}=T^{\bf 2}=\frac{1}{4}I_2$$
and do not depend on $b$.
\end{ex}

Thus, it seems to us that $S^{\bf u}(f)$ is a more relevant sensitivity measure, and, in the sequel, we will focus our study on $S^{\bf u}(f)$. \\

\section{Estimation of $S^{\bf u}(f)$}\label{sec:estimation}

\subsection{The Pick and Freeze estimator}
 In practice, the covariance matrices $\Cu$ and $\Sigma$ are not analytically available. In the scalar case ($k=1$), it is customary to estimate $ S^{\bf u}(f) $ by using a Monte-Carlo Pick and Freeze method \cite{sobol1993,janon2012asymptotic}, which uses a finite sample of evaluations of $f$.\\\\
  In this Section, we propose a Pick and Freeze estimator for the vectorial case which generalizes the $T_N$ estimator studied in \cite{janon2012asymptotic}. We set $Y^{\bf u}=f(X_{\bf u}, X_{\sim\bf u}')$ where $X_{\sim\bf u}'$ is an independent copy of $X_{\sim\bf u}$ which is still independent of $X_{\bf u}$. Let $N$ be an integer. We take $N$ independent copies $Y_1, \ldots, Y_N$ (resp. $Y_1^{\bf u},\ldots,Y_N^{\bf u}$) of $Y$ (resp. $Y^{\bf u}$). For $l=1,\ldots,k$, and $i=1,\ldots,N$, we also  denote by $Y_{i, l}$ (resp. $Y_{i, l}^{\bf u}$) the $l^\text{th}$ component of $Y_i$ (resp. $Y_i^{\bf u}$). We then define the following estimator of $S^{\bf u}(f)$
\begin{equation}\label{def:SN}
 S_{{\bf u}, N} = \frac{\sum_{l=1}^k \left( \frac 1 N \sum_{i=1}^N Y_{i,l} Y_{i,l}^{\bf u} - \left( \frac1N  \sum_{i=1}^N \frac{Y_{i,l}+Y_{i,l}^{\bf u}}{2} \right)^2 \right)    }{ \sum_{l=1}^k \left( \frac 1 N \sum_{i=1}^N \frac{ Y_{i,l}^2+(Y_{i,l}^{\bf u})^2 }{2} - \left( \frac1N \sum_{i=1}^N \frac{ Y_{i,l}+Y_{i,l}^{\bf u} }{2} \right)^2  \right) }. 
 \end{equation}

\begin{rmk}\label{rem:SN}
Note that this estimator can be written
 \begin{equation} 
 S_{{\bf u}, N}= \frac{\Tr\left(C_{{\bf u},N}\right)}{\Tr\left(\Sigma_N\right)}
 \end{equation}
where $C_{{\bf u},N}$ and $\Sigma_N$ are the  empirical estimators of $C_{\bf u}=\Cov(Y,Y^{\bf u})$ and $\Sigma=\Var(Y)$  defined by
$$C_{{\bf u},N}=\frac 1 N \sum_{i=1}^N Y_i^{\bf u}Y_i^t- \left(\frac 1 N \sum_{i=1}^N 
\frac{Y_{i}+Y_{i}^{\bf u}}{2}\right)\left(\frac 1 N \sum_{i=1}^N \frac{Y_{i}+Y_{i}^{\bf u}}{2}\right)^t$$
and
$$\Sigma_N=\frac 1 N \sum_{i=1}^N  \frac{ Y_{i}Y_{i}^t+Y_{i}^{\bf u}(Y_{i}^{\bf u})^t }{2} -\left(\frac 1 N \sum_{i=1}^N 
\frac{Y_{i}+Y_{i}^{\bf u}}{2}\right)\left(\frac 1 N \sum_{i=1}^N \frac{Y_{i}+Y_{i}^{\bf u}}{2}\right)^t.$$
\end{rmk}

\subsection{Asymptotic properties}

\noi
A straightforward application of the Strong Law of Large Numbers leads to

\begin{proposition}[Consistency]
 $S_{{\bf u}, N}$ converges almost surely to  $S^{\bf u}(f)$ when $N \rightarrow +\infty$.
\end{proposition}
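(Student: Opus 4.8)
The plan is to express $S_{{\bf u},N}$ as a ratio of two quantities, each of which is a continuous function of a vector of empirical means, and then invoke the Strong Law of Large Numbers componentwise. Concretely, following Remark \ref{rem:SN}, write $S_{{\bf u},N} = \Tr(C_{{\bf u},N})/\Tr(\Sigma_N)$. The numerator $\Tr(C_{{\bf u},N})$ is, by \eqref{def:SN}, a fixed polynomial (hence continuous) function of the empirical means $\frac1N\sum_i Y_{i,l}Y_{i,l}^{\bf u}$ and $\frac1N\sum_i \frac{Y_{i,l}+Y_{i,l}^{\bf u}}{2}$ for $l=1,\dots,k$; likewise the denominator $\Tr(\Sigma_N)$ is a continuous function of $\frac1N\sum_i \frac{Y_{i,l}^2+(Y_{i,l}^{\bf u})^2}{2}$ and the same empirical first moments.

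First I would note that the pairs $(Y_i,Y_i^{\bf u})$, $i=1,\dots,N$, are i.i.d., and that each of the relevant coordinatewise products $Y_{i,l}Y_{i,l}^{\bf u}$, $Y_{i,l}^2$, $(Y_{i,l}^{\bf u})^2$, $Y_{i,l}$, $Y_{i,l}^{\bf u}$ is integrable: indeed $\E\|Y\|^2<\infty$ by assumption, $Y^{\bf u}$ has the same distribution as $Y$ so $\E\|Y^{\bf u}\|^2<\infty$, and by Cauchy--Schwarz $\E|Y_{i,l}Y_{i,l}^{\bf u}|\le (\E Y_{i,l}^2)^{1/2}(\E (Y_{i,l}^{\bf u})^2)^{1/2}<\infty$. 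Hence the Strong Law of Large Numbers applies to each empirical mean: almost surely, $\frac1N\sum_i Y_{i,l}Y_{i,l}^{\bf u}\to \E(Y_l Y_l^{\bf u})$, $\frac1N\sum_i \frac{Y_{i,l}^2+(Y_{i,l}^{\bf u})^2}{2}\to \E(Y_l^2)$ (using that $Y$ and $Y^{\bf u}$ are identically distributed), and $\frac1N\sum_i \frac{Y_{i,l}+Y_{i,l}^{\bf u}}{2}\to \E(Y_l)$.

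Next I would identify the limits. Taking the intersection over the finitely many ($l=1,\dots,k$) almost sure events, all these convergences hold simultaneously on an event of probability one. On that event, by continuity of the numerator and denominator functions, $\Tr(C_{{\bf u},N})\to \sum_l\big(\E(Y_lY_l^{\bf u})-(\E Y_l)^2\big)=\sum_l \Cov(Y_l,Y_l^{\bf u})=\Tr(C_{\bf u})$, where I use that $\Cov(Y_l,Y_l^{\bf u})=\Var(\E(Y_l\mid X_{\bf u}))=(C_{\bf u})_{ll}$ is the standard Pick-and-Freeze identity (as in \cite{janon2012asymptotic}), and similarly $\Tr(\Sigma_N)\to \sum_l\big(\E(Y_l^2)-(\E Y_l)^2\big)=\Tr(\Sigma)$. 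Since the covariance matrix of $Y$ is assumed positive definite, $\Tr(\Sigma)>0$, so the ratio is continuous at the limit point and $S_{{\bf u},N}=\Tr(C_{{\bf u},N})/\Tr(\Sigma_N)\to \Tr(C_{\bf u})/\Tr(\Sigma)=S^{\bf u}(f)$ almost surely.

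The only mildly delicate points are the integrability check for the product terms (handled by Cauchy--Schwarz from $\E\|Y\|^2<\infty$) and making sure the denominator limit is nonzero so that the quotient map is continuous there; both are immediate given the standing assumptions. There is no real obstacle — this is, as stated, a straightforward application of the SLLN combined with the continuous mapping principle for almost sure convergence.
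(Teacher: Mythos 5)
Your proof is correct and follows exactly the route the paper intends: the paper simply states that the result is "a straightforward application of the Strong Law of Large Numbers," and your argument fills in precisely those details (integrability via Cauchy--Schwarz, componentwise SLLN, continuity of the quotient at a nonzero denominator, and the Pick-and-Freeze identity to identify the limit). No discrepancy to report.
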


We now study to the asymptotic normality of $(S_{{\bf u},N})_N$.

\begin{proposition}[Asymptotic normality]
\label{prop:an}
Assume $\E(Y_{l}^4)<\infty$ for all $l=1,\ldots,k$. For $l=1,\ldots,k$, we set
\[ U_l = (Y_{1,l}-\E(Y_l))(Y_{1,l}^{\bf u}-\E(Y_l)), \qquad
   V_l = ( Y_{1,l}-\E(Y_l) )^2 + ( Y_{1,l}^{\bf u}-\E(Y_l) )^2. \]

Then
\begin{equation}\label{clt}
\sqrt{N} \left( S_{{\bf u}, N}-S^{\bf u}(f) \right)
\overset{\mathcal{L}}{\underset{N\to\infty}{\rightarrow}}
\mathcal{N}_1\left(0,\sigma^2\right)
\end{equation}
where
\begin{equation}\label{sigma2} \sigma^2 = a^2 \sum_{l,l' \in \{1,\ldots,k\}} \Cov(U_l, U_{l'}) + b^2 \sum_{l,l' \in \{1,\ldots,k\}} \Cov(V_l, V_{l'}) + 2 a b \sum_{l,l' \in \{1,\ldots,k\}} \Cov(U_l, V_{l'}), \end{equation}
with \[ a = \frac{1}{\sum_{l=1}^k \Var(Y_l)}, \qquad b=-\frac{a}{2} S^{\bf u}(f). \]
\end{proposition}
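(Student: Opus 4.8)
The plan is to realize $S_{{\bf u},N}$ as a smooth function of a vector of empirical means and then invoke the multivariate central limit theorem together with the delta method. Concretely, introduce for each $i$ the $\R^{3k+?}$-valued vector built from the coordinates $Y_{i,l}Y_{i,l}^{\bf u}$, $Y_{i,l}^2$, $(Y_{i,l}^{\bf u})^2$, $Y_{i,l}$ and $Y_{i,l}^{\bf u}$ ($l=1,\dots,k$); these are i.i.d.\ across $i$, and the assumption $\E(Y_l^4)<\infty$ (for both $Y$ and its Pick-and-Freeze copy $Y^{\bf u}$, which have the same law componentwise) guarantees that this vector has finite second moment, so $\sqrt N$ times its recentered empirical mean converges in law to a centered Gaussian with the corresponding covariance matrix. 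By Remark~\ref{rem:SN}, $S_{{\bf u},N}=\Tr(C_{{\bf u},N})/\Tr(\Sigma_N)$ is a fixed rational function of these empirical means, well-defined and $C^\infty$ in a neighbourhood of the true mean vector because $\Tr(\Sigma)=\sum_l\Var(Y_l)>0$ (the covariance matrix of $Y$ being positive definite by assumption). The delta method then yields \eqref{clt} with $\sigma^2$ equal to $\nabla g^{\,t}\,\Gamma\,\nabla g$, where $g$ is this rational function and $\Gamma$ the limiting covariance.

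The next step is to identify this quadratic form with the explicit expression \eqref{sigma2}. First I would note that recentering each observation by the true mean $\E(Y_l)$ does not change $C_{{\bf u},N}$ or $\Sigma_N$ (both are translation-invariant empirical covariance-type quantities), so without loss of generality we may replace $Y_{i,l}$ by $Y_{i,l}-\E(Y_l)$ and $Y_{i,l}^{\bf u}$ by $Y_{i,l}^{\bf u}-\E(Y_l)$ throughout; after this reduction the empirical second-moment terms $\frac1N\sum Y_{i,l}Y_{i,l}^{\bf u}$ and $\frac1N\sum\frac{Y_{i,l}^2+(Y_{i,l}^{\bf u})^2}{2}$ converge to $\Cov$ and $\Var$ terms, while the empirical-mean-squared corrections $\bigl(\frac1N\sum\frac{Y_{i,l}+Y_{i,l}^{\bf u}}{2}\bigr)^2$ are of order $O_P(1/N)$ and therefore contribute nothing at the $\sqrt N$ scale. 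Consequently $\sqrt N(S_{{\bf u},N}-S^{\bf u}(f))$ has the same limit as $\sqrt N$ applied to the first-order expansion of $\bigl(\sum_l\frac1N\sum_i U_{i,l}\bigr)\big/\bigl(\sum_l\frac1N\sum_i \tfrac12 V_{i,l}\bigr)$ around the point $\bigl(\Tr(C_{\bf u}),\tfrac12\Tr(\Sigma)\bigr)$, with $U_{i,l},V_{i,l}$ the obvious i.i.d.\ copies of $U_l,V_l$. The gradient of the map $(x,y)\mapsto x/y$ at that point gives the linear combination $\frac1{\Tr\Sigma}\sum_l U_l + \bigl(-\tfrac{\Tr C_{\bf u}}{(\Tr\Sigma)^2}\bigr)\sum_l V_l = a\sum_l U_l + b\sum_l V_l$, using $\Tr\Sigma=\sum_l\Var(Y_l)$ and $\Tr C_{\bf u}/\Tr\Sigma=S^{\bf u}(f)$, which exactly matches the stated constants $a$ and $b$. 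Taking the variance of $a\sum_l U_l+b\sum_l V_l$ and expanding bilinearly produces precisely \eqref{sigma2}.

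What remains is purely bookkeeping: checking that the remainder in the Taylor/delta-method expansion is $o_P(1/\sqrt N)$ (standard, since the empirical means are $\sqrt N$-consistent and $g$ is twice differentiable at the limit point), and verifying that $U_l$ and $V_l$ are square-integrable so that all the covariances in \eqref{sigma2} are finite — this follows from $\E(Y_l^4)<\infty$ since $U_l$ and $V_l$ are quadratic in the $Y$'s. I would also record that the $\Tr(\Sigma)\neq 0$ hypothesis (equivalently $Y$ non-constant, cf.\ the remark after Remark~\ref{rem:min_pb}) is what makes the denominator bounded away from zero in a neighbourhood of the limit, so that $g$ is differentiable there.

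The main obstacle is the careful identification of the delta-method quadratic form with the particular closed form \eqref{sigma2}: one has to argue cleanly that the empirical-mean correction terms are asymptotically negligible and that recentering by the true means is harmless, so that the effective ``statistic'' is genuinely the ratio of the two sums $\sum_l \bar U_l$ and $\tfrac12\sum_l \bar V_l$ rather than the more cumbersome original expression in \eqref{def:SN}. Once that reduction is in place, the computation of $a$, $b$ and the bilinear expansion of the variance is mechanical.
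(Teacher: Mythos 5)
Your proposal is correct and follows essentially the same route as the paper: write $S_{{\bf u},N}$ as a smooth function of a vector of i.i.d.\ empirical means (after centering by the true mean, which leaves the estimator unchanged) and apply the multivariate CLT plus the delta method, with the gradient computation yielding exactly the constants $a$ and $b$. The only cosmetic difference is that the paper absorbs the empirical-mean correction terms into the function $\Phi$ and observes that the corresponding gradient components vanish at $\E(W_1)$, whereas you dispose of them by a separate $O_P(1/N)$ negligibility argument before applying a two-coordinate delta method; both are valid.
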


\begin{proof}
Since $S_{{\bf u},N}$ remains invariant when $Y$ is changed to $Y - \E(Y)$, we have \[ S_{{\bf u}, N} = \Phi\left( \frac{1}{N} \sum_{i=1}^N W_i \right), \] where
\[
W_i = \begin{pmatrix} ( Y_{i1}-\E(Y_{1}) )(Y_{i1}^{\bf u}-\E(Y_1)) \\
\vdots \\
 Y_{ik}-\E(Y_{k}) )(Y_{ik}^{\bf u}-\E(Y_k)) \\
 Y_{i1}-\E(Y_1) + Y_{i1}^{\bf u} - \E(Y_1) \\
 \vdots \\
 Y_{ik}-\E(Y_k) + Y_{ik}^{\bf u} - \E(Y_k) \\
 ( Y_{i1}-\E(Y_1) )^2 + ( Y_{i1}^{\bf u}-\E(Y_1) )^2 \\
 \vdots \\
 ( Y_{ik}-\E(Y_k) )^2 + ( Y_{ik}^{\bf u}-\E(Y_k) )^2
 \end{pmatrix} \]
and
\[ \Phi(x_1,\ldots,x_k,y_1,\ldots,y_k,z_1,\ldots,z_k) = 
\frac{ \sum_{l=1}^k \left( x_l - (y_l/2)^2 \right) }{\sum_{l=1}^k \left( z_l/2 - (y_l/2)^2 \right) }. \]

The so-called Delta method (\cite{van2000asymptotic}, Theorem 3.1) gives
\[ \sqrt N (S_{{\bf u},N} - S^{\bf u}(f)) 
\overset{\mathcal{L}}{\underset{N\to\infty}{\rightarrow}}
\mathcal{N}_1\left(0,\sigma^2\right) \]
where $\sigma^2 = g^t \Gamma g$, $\Gamma$ the covariance matrix of $W_1$ and
\[ g = \nabla\Phi(\E(W_1)). \]
We have
\[ \E(W_1) = \left( \Cov(Y_1,Y_1^{\bf u}), \ldots, \Cov(Y_k,Y_k^{\bf u}), 0, \ldots, 0, 2 \Var Y_1, \ldots, 2 \Var Y_k \right)^t, \]
and by differentiation of $\Phi$, $g = \left( a, \ldots, a, 0, \ldots, 0, b, \ldots, b \right)^t.$ A simple matrix calculus leads to \eqref{sigma2}.
\end{proof}

\begin{rmk}
Following the same idea, it is possible, for $\bf v \subset \{1,\ldots,p\}$, to derive a (multivariate) central limit theorem for $$\left( S_{{\bf u},N},S_{{\bf v},N}, S_{{\bf u \cup v},N}\right)=\left(\frac{\Tr\left(C_{{\bf u},N}\right)}{\Tr\left(\Sigma_N\right)},\frac{\Tr\left(C_{{\bf v},N}\right)}{\Tr\left(\Sigma_N\right)},\frac{\Tr\left(C_{{\bf u \cup v},N}\right)}{\Tr\left(\Sigma_N\right)}\right).$$
We then can derive a (scalar) central limit theorem for $ S_{{\bf u\cup v}, N} - S_{{\bf u},N} - S_{{\bf v},N} $, a natural estimator of $S^{\bf u \cup v} - S^{\bf u} - S^{\bf v}$, which quantifies the influence  (for $u \cap v=\emptyset$) of the interaction between the variables of ${\bf u}$ and ${\bf v}$.
\end{rmk}

\begin{proposition}\label{pro:dimfinie}
Assume $\E(Y_{l}^4)<\infty$ for $l=1,\ldots,k$. Then
$\left(S_{{\bf u},N}\right)_{N}$ is asymptotically efficient for estimating $S^{\bf u}(f)$ among regular estimator sequences that are function of the exchangeable pair $(Y, Y^{\bf u})$.
\end{proposition}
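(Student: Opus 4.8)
The plan is to establish a semiparametric efficiency bound for estimating $S^{\bf u}(f)$ in the model where one observes i.i.d. copies of the exchangeable pair $(Y,Y^{\bf u})$, and then to verify that the asymptotic variance $\sigma^2$ from Proposition \ref{prop:an} attains this bound. First I would set up the nonparametric statistical model: the unknown parameter is the joint distribution $\mathbb{P}$ of $(Y,Y^{\bf u})$ on $\R^k\times\R^k$, subject only to the moment condition $\E(Y_l^4)<\infty$ and to the exchangeability constraint that $(Y,Y^{\bf u})$ and $(Y^{\bf u},Y)$ have the same law (which reflects the Pick--Freeze construction). The functional of interest is
\[ S^{\bf u}(f) = \frac{\sum_{l=1}^k\bigl(\E(Y_l Y_l^{\bf u}) - \E(Y_l)^2\bigr)}{\sum_{l=1}^k\bigl(\E(Y_l^2) - \E(Y_l)^2\bigr)}, \]
using exchangeability to symmetrize the numerator and denominator, exactly as in the scalar treatment of \cite{janon2012asymptotic}.

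The core computation is to find the efficient influence function. I would follow the one-dimensional argument of \cite{janon2012asymptotic} closely: pick a regular parametric submodel $t\mapsto \mathbb{P}_t$ through $\mathbb{P}$ with score $h$ (a function in $L^2_0(\mathbb{P})$, further constrained to be symmetric in its two $\R^k$-block arguments because of exchangeability), differentiate $t\mapsto S^{\bf u}(f)(\mathbb{P}_t)$ at $t=0$, and read off the pathwise derivative as an inner product $\langle \tilde{\ell}, h\rangle_{L^2(\mathbb{P})}$. Because $S^{\bf u}$ is a smooth function of the finitely many moments appearing above, the derivative is a linear combination of $\E(h\cdot g_j)$ for the relevant coordinate functions $g_j$ (namely $Y_lY_l^{\bf u}$, $Y_l+Y_l^{\bf u}$, and $Y_l^2+(Y_l^{\bf u})^2$, symmetrized). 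The efficient influence function is then obtained by projecting the resulting gradient onto the closed tangent space — here the tangent space of the exchangeable model, i.e. the symmetric functions in $L^2_0$ — and the efficiency bound is its variance. One checks that this variance equals precisely $\sigma^2$ as given in \eqref{sigma2}, with the coefficients $a$ and $b$; indeed the influence function will be, up to the symmetrization projection, $a\sum_l(U_l - \E U_l) + b\sum_l(V_l - \E V_l)$ evaluated on the exchangeable pair, which is exactly what the Delta-method computation in Proposition \ref{prop:an} produces. Since $(S_{{\bf u},N})_N$ is asymptotically linear with this same influence function, it is efficient by the convolution theorem (\cite{van2000asymptotic}).

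Concretely I would organize the proof as: (i) recall the characterization of asymptotic efficiency via the convolution theorem and the efficient influence function (van der Vaart); (ii) describe the tangent set of the exchangeable model and note it is the subspace of symmetric elements of $L^2_0(\mathbb{P})$ — the orthogonal projection onto it is $h\mapsto \tfrac12(h(y,y^{\bf u}) + h(y^{\bf u},y))$; (iii) compute the pathwise derivative of the functional and hence its canonical gradient in the unconstrained model; (iv) project that gradient onto the symmetric subspace to get the efficient influence function, and observe the estimator $S_{{\bf u},N}$ is already a symmetric statistic so its influence function automatically lies in the tangent space; (v) identify the variance with $\sigma^2$ and conclude. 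The main obstacle — and the only place requiring genuine care rather than bookkeeping — is step (iv) and its interplay with the exchangeability constraint: one must argue that imposing symmetry on the scores does not shrink the attainable variance below what $S_{{\bf u},N}$ achieves, i.e. that the influence function of $S_{{\bf u},N}$ already coincides with its own symmetric projection (true because the estimating functions used to build $S_{{\bf u},N}$ are symmetric in $(Y,Y^{\bf u})$), so that $S_{{\bf u},N}$ is not merely regular and asymptotically linear but has influence function equal to the efficient one. The remaining multivariate matrix algebra tying the projected gradient to formula \eqref{sigma2} is routine and parallels the scalar case verbatim.
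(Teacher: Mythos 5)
Your argument is correct, and it reaches the same conclusion through the same underlying semiparametric machinery (the tangent space of the exchangeable model is the set of symmetric elements of $L^2_0(\mathbb P)$, the projection onto it is the symmetrization $h\mapsto\frac12(h(y,y^{\bf u})+h(y^{\bf u},y))$, and the symmetrized empirical moments attain the bound), but you organize it quite differently from the paper. The paper modularizes: it first asserts, by carrying over the scalar argument of Proposition 2.5 of \cite{janon2012asymptotic}, that $C_{{\bf u},N}$ and $\Sigma_N$ are each efficient for $C_{\bf u}$ and $\Sigma$; it then invokes Theorem 25.50 of \cite{van2000asymptotic} (efficiency in product spaces) to get joint efficiency of the pair, and Theorem 25.47 (efficiency is preserved by Hadamard/Fr\'echet-differentiable maps) applied to $(\cdot,\cdot)\mapsto \Tr(\cdot)/\Tr(\cdot)$ to conclude. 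You instead compute the efficient influence function of the scalar functional $S^{\bf u}(f)$ in one pass and verify it coincides with the influence function produced by the Delta method in Proposition \ref{prop:an}, concluding by the convolution theorem. What the paper's route buys is brevity: the only projection argument is delegated to the cited scalar proof, and the preservation theorems do the rest. What your route buys is self-containedness and an explicit identification of the efficiency bound with $\sigma^2$ of \eqref{sigma2}, without needing the product-space theorem at all; it also makes visible the one genuinely delicate point, which you correctly isolate, namely that the gradient of the functional is only determined up to its extension off the exchangeable submodel, and that the estimator is efficient precisely because it is built from statistics that are already symmetric in $(Y,Y^{\bf u})$, so its influence function equals its own projection onto the tangent space. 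Either write-up would be acceptable; yours is essentially the proof the paper outsources to \cite{janon2012asymptotic} and \cite{van2000asymptotic}, written out in the multivariate setting.
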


\begin{proof}
Note that
$$
S^{\bf u}(f) = \Phi\left(\Tr\left(C_{{\bf u}}\right),\Tr\left(\Sigma_{Y}\right)\right)
 \;\;\;\; \mbox{and }S_{{\bf u},N}= \Phi\left(\Tr\left(C_{{\bf u},N}\right),\Tr\left(\Sigma_N\right)\right)$$
where $\Phi$ is defined by $\Phi(x,y)={x}/{y}$.\\
Proceeding as in the proof of Proposition 2.5 in \cite{janon2012asymptotic}, we derive that $C_{{\bf u},N}$ (respectively $\Sigma_N$) is asymptotically efficient for estimating $C_{{\bf u}}$ (resp. $\Sigma_{Y}$). 
Then, Theorem 25.50 (efficiency in product space) in \cite{van2000asymptotic} gives that
$\left(C_{{\bf u},N},\Sigma_N\right)$ is asymptotically efficient for estimating $\left(C_{{\bf u}},\Sigma_{Y}\right)$. \\
Now since $\Phi$ (respectively $\Tr$) is   differentiable in $\R^2 \setminus \{y=0\}$ (resp. in $\mathcal{M}_{k}$) we can apply Theorem 25.47 in \cite{van2000asymptotic}  (efficiency and Delta method) to get that $\left(\Phi\left(\Tr\left(C_{{\bf u},N}\right),\Tr\left(\Sigma_N\right)\right)\right)_{N}$ is also asymptotically efficient for estimating $\Phi\left(\Tr\left(C_{{\bf u}}\right),\Tr\left(\Sigma_{Y}\right)\right)$.
\end{proof}

\subsection{Concentration inequality}\label{s:concentre}

In this section we apply Corollary 1.17 of Ledoux \cite{Ledoux01} to give a concentration inequality for $S_{{\bf u},N}$. In order to be self-contained we recall Ledoux's result.

\begin{cor}[Corollary 1.17 of \cite{Ledoux01}]
Let $P=\mu_1\otimes\ldots\otimes\mu_n$ be a product probability measure on the cartesian product $X=X_1\times\ldots\times X_n$ of metric spaces $(X_i,d_i)$ with finite diameters $D_i$, $i=1,\ldots, n$, endowed with the $l^1$ metric$d=\sum_{i=1}^n d_i$. Let $F$ be a 1-Lipschitz function on $(X,d)$. Then, for every $r\geq 0$, 
$$P\left(F\geq \E_P(F)+r \right)\leq e^{-r^2/2D^2}$$
where $D^2= \sum_{i=1}^n D_i^2$.
\end{cor}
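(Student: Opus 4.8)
The plan is to recognize this as an instance of the bounded-differences (McDiarmid-type) phenomenon and to derive it from the Azuma--Hoeffding inequality applied to a Doob martingale; the slightly weaker constant $\tfrac12$ in the exponent is then obtained a fortiori, and it can also be reached exactly through the entropy/tensorization argument.

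First I would record the coordinatewise consequence of the hypothesis: since $F$ is $1$-Lipschitz for $d=\sum_{i=1}^n d_i$, replacing the $i$-th coordinate $x_i$ by $x_i'$ while freezing the others changes the value of $F$ by at most $d_i(x_i,x_i')\le D_i$. Because $F$ is bounded (the space has finite diameter $\sum_i D_i$) it is $P$-integrable, so the Doob martingale $F_i:=\E_P(F\mid X_1,\dots,X_i)$, $i=0,\dots,n$, is well defined with $F_0=\E_P(F)$ and $F_n=F$. The key step is to show that, conditionally on $(X_1,\dots,X_{i-1})$, the increment $F_i-F_{i-1}$ ranges in an interval of length at most $D_i$: using the product structure $P=\mu_1\otimes\cdots\otimes\mu_n$ and Fubini, $F_i$ equals (given the past) a function of $X_i$ alone obtained by integrating out $X_{i+1},\dots,X_n$, whose oscillation in $X_i$ is bounded by $D_i$ from the previous paragraph; and $F_{i-1}$ is the $\mu_i$-average of that function, hence lies in the same interval.

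Then I would invoke the one-sided Azuma--Hoeffding inequality for a martingale whose increments are confined (conditionally) to intervals of lengths $c_i=D_i$: this yields $P(F\ge \E_P(F)+r)\le\exp(-2r^2/D^2)$ with $D^2=\sum_i D_i^2$, which already implies the stated bound $\exp(-r^2/(2D^2))$. As an alternative producing exactly the constant $\tfrac12$, I would bound the log-Laplace transform $\psi(\lambda)=\log\E_P\,e^{\lambda(F-\E_P F)}$ by tensorizing entropy over the $n$ independent factors and applying the Lipschitz control on each factor to obtain $\psi(\lambda)\le\lambda^2D^2/2$, then optimize the Chernoff bound $P(F-\E_P F\ge r)\le e^{-\lambda r}e^{\psi(\lambda)}$ over $\lambda>0$.

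The only real subtlety — and the step I would be most careful about — is the confinement of the martingale increment at step $i$ to an interval of width $D_i$ (rather than the generally larger full diameter $\sum_j D_j$): this is exactly where independence enters, via rewriting $F_{i-1}$ as a $\mu_i$-average of values each within $D_i$ of the corresponding value of $F_i$. Once this is in hand, both routes to the conclusion are routine.
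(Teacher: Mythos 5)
Your proof is correct, but note that the paper does not actually prove this statement: it is quoted verbatim as Corollary 1.17 of Ledoux's monograph, purely ``to be self-contained'', and is then used as a black box in the proof of Proposition 5.5. So there is no in-paper argument to compare against; what you have supplied is a standalone derivation. Your route --- the bounded-differences observation $\sup_{x_i'}|F(\dots,x_i,\dots)-F(\dots,x_i',\dots)|\le d_i(x_i,x_i')\le D_i$, the Doob martingale $F_i=\E_P(F\mid X_1,\dots,X_i)$, the confinement of each conditional increment to an interval of length $D_i$ via the product structure and Fubini, and then Azuma--Hoeffding --- is the standard McDiarmid argument and is sound, including the one genuinely delicate point you flag (that $F_{i-1}$ is the $\mu_i$-average of a function whose oscillation in $x_i$ is at most $D_i$, so the increment lives in an interval of width $D_i$ rather than $\sum_j D_j$). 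It in fact yields the stronger bound $\exp(-2r^2/D^2)$, which implies the stated $\exp(-r^2/(2D^2))$ a fortiori. Your alternative sketch via tensorization of the log-Laplace transform, giving $\psi(\lambda)\le\lambda^2D^2/2$, is essentially how Ledoux himself arrives at the constant $1/2$ in the exponent (via the Laplace-functional bounds for product measures in his Section 1.6), so that variant is the one closest in spirit to the cited source. Either way the statement is established, and nothing in the paper's subsequent use of the corollary is affected.
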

Our concentration inequality is the following
\begin{proposition} \label{prop_concentr}
Assume that $Y$ is bounded almost surely in $\R^k$, that is, there exists $\rho>0$ so that $\norm{Y}_2 < \rho$.
Let $v_l:=\Sigma_{l,l} / \rho^2$ for $l=1\ldots k$. \medskip

Then, for all $t \geq 0$, we have
\[ \P \left( S_{{\bf u},N}-S^{\bf u}(f) \geq t\right) \leq \exp\left(- \frac{N}{32} \left( \frac{ t - \frac{1}{2N} (\suf+t-1)(\suf+1) }{{1+\suf+t}+|\suf+t-1|} \sum_{l=1}^k v_l \right)^2  \right), \]
and, for all $t \geq \frac{ (1-\suf)(1+\suf) }{ 2N - (1+\suf)}$, we have
\[ \P \left( S_{{\bf u},N}-S^{\bf u}(f) \leq -t\right) \leq \exp\left(- \frac{N}{32} \left( \frac{  t + \frac{1}{2N} (\suf-t-1)(\suf+1) }{{1+\suf-t}+|\suf-t-1|} \sum_{l=1}^k v_l \right)^2  \right)  . \] 
\end{proposition}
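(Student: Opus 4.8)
The plan is to realize $S_{{\bf u},N}$ as a function $F$ of the $2N$ independent vectors $(Y_i, Y_i^{\bf u})$, $i=1,\ldots,N$, check that $F$ is Lipschitz with a controllable constant on the relevant bounded product space, and then apply Ledoux's Corollary 1.17 twice (once to $F$, once to $-F$). More precisely, I would first note that by Remark \ref{rem:SN} we have $S_{{\bf u},N}=\Tr(C_{{\bf u},N})/\Tr(\Sigma_N)$, and since everything is invariant under translating $Y$ by $\E(Y)$, we may assume $\E(Y)=0$, so that $\norm{Y_i}_2<\rho$ and $\norm{Y_i^{\bf u}}_2<\rho$ almost surely. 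The ambient space is then the product of $N$ copies of the ball of radius $\rho$ in $\R^k\times\R^k$ (each copy has $l^1$-over-coordinates diameter that we bound in terms of $\rho$), and the total squared diameter $D^2$ is of order $N\rho^2$ up to an explicit constant — this is where the $N/32$ and the $\rho^2$ hidden in $v_l$ come from.

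The main work is the Lipschitz estimate. I would write $S_{{\bf u},N}=\mathcal N_N/\mathcal D_N$ with $\mathcal N_N=\Tr(C_{{\bf u},N})$ and $\mathcal D_N=\Tr(\Sigma_N)$, both of which are quadratic polynomials in the $2Nk$ scalar entries, hence locally Lipschitz on the bounded domain with gradients one bounds crudely by $\rho$ times combinatorial factors. The quotient rule gives, on the event that the denominator is bounded below, a Lipschitz bound of the form $(\abs{\mathcal D_N}\norm{\nabla\mathcal N_N}+\abs{\mathcal N_N}\norm{\nabla\mathcal D_N})/\mathcal D_N^2$. The key point is that $\mathcal D_N\geq 0$ is a sum of empirical variances, and on the typical realization $\mathcal D_N\approx\Tr(\Sigma)/$(normalization) while $\mathcal N_N/\mathcal D_N\approx\suf$; the factors $1+\suf\pm t+\abs{\suf\pm t-1}$ and $\sum_l v_l$ appearing in the statement are exactly the outcome of bounding this ratio when $S_{{\bf u},N}$ lies on the relevant side of $\suf\pm t$. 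Rather than making $F$ globally Lipschitz, the cleaner route — and I expect this is the intended one — is to apply Ledoux's inequality to a truncated/localized version: on the region $\{S_{{\bf u},N}-\suf\geq t\}$ (resp. $\leq -t$) the quantities $\mathcal N_N,\mathcal D_N$ satisfy the corresponding one-sided inequalities, which both lower-bounds $\mathcal D_N$ and controls $\mathcal N_N$, yielding a Lipschitz constant valid on that region; one then runs the concentration argument there. The second (lower-tail) statement needs the extra hypothesis $t\geq (1-\suf)(1+\suf)/(2N-(1+\suf))$ precisely to guarantee the denominator stays positive and bounded away from $0$ in that region, which is why that threshold appears.

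The remaining ingredient is to replace $\E_P(F)$ in Ledoux's bound by $\suf$: since $\E(C_{{\bf u},N})$ and $\E(\Sigma_N)$ differ from $C_{\bf u},\Sigma$ only by the bias of the empirical mean term (of order $1/N$), one gets $\E_P(S_{{\bf u},N})=\suf+O(1/N)$ with an explicit $O(1/N)$ term; substituting this explicit bias is the source of the correction terms $\frac{1}{2N}(\suf+t-1)(\suf+1)$ (resp. $\frac{1}{2N}(\suf-t-1)(\suf+1)$) inside the exponent. So the skeleton is: (i) center $Y$; (ii) compute the $O(1/N)$ bias of $S_{{\bf u},N}$ explicitly; (iii) on the relevant one-sided region, bound the Lipschitz constant of $S_{{\bf u},N}$ as a function of the $N$ blocks $(Y_i,Y_i^{\bf u})$, getting the factor $(1+\suf+t+\abs{\suf+t-1})^{-1}\sum_l v_l$ (resp. with $-t$); (iv) plug into Ledoux's corollary with $D^2\lesssim N\rho^2$, absorbing constants into the $32$.

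I expect the genuine obstacle to be step (iii): getting a Lipschitz bound that is simultaneously sharp enough to produce the stated clean expression and honest about the worst-case behaviour of the ratio $\mathcal N_N/\mathcal D_N^2$. One must be careful that changing a single block $(Y_i,Y_i^{\bf u})$ moves both the numerator and the (empirical, data-dependent) denominator, and that the denominator — a difference of a mean of squares and a square of a mean — is only controlled from below via the one-sided event one is conditioning on; making this rigorous (rather than heuristic) while keeping track of all the $\rho$'s and the factor $1/2$'s is the delicate part, and is presumably why the final constant is the somewhat loose $32$.
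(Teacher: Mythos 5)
Your skeleton shares several correct ingredients with the paper's argument (reduction to the unit ball by scaling, Ledoux's Corollary 1.17 with $D^2$ of order $N$, and the identification of the $\frac{1}{2N}(\suf\pm t-1)(\suf+1)$ terms as the bias of the empirical estimators), but the central step --- how you make the \emph{ratio} $S_{{\bf u},N}=\Tr(C_{{\bf u},N})/\Tr(\Sigma_N)$ amenable to a Lipschitz concentration inequality --- does not work as described. Ledoux's corollary requires a function that is $1$-Lipschitz on the \emph{whole} product space; a Lipschitz bound ``valid on the region $\{S_{{\bf u},N}-\suf\geq t\}$'' is not enough, and in any case that region does not lower-bound the denominator: the event $\Tr(C_{{\bf u},N})\geq(\suf+t)\Tr(\Sigma_N)$ is perfectly compatible with $\Tr(\Sigma_N)$ being arbitrarily close to $0$ (all sample points nearly equal), so the quotient-rule Lipschitz constant blows up there and no truncation of the kind you sketch rescues the argument. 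Your explanation of the lower-tail threshold is also off: it is not about keeping the denominator positive, but about ensuring that the deviation level $r=-\E_P(F)$ fed into Ledoux's inequality is nonnegative.

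The missing idea is a linearization of the \emph{event} rather than of the function. Since $\Tr(\Sigma_N)\geq 0$ always (it is a sum of empirical variances), one has the exact identity
\[ \P\left(S_{{\bf u},N}-\suf\geq t\right)=\P\left(\Tr(C_{{\bf u},N})-(\suf+t)\Tr(\Sigma_N)\geq 0\right), \]
and the function $G=\Tr(C_{{\bf u},N})-(\suf+t)\Tr(\Sigma_N)$ is a quadratic polynomial in the $2Nk$ coordinates, hence \emph{globally} Lipschitz on $(B_k(0,1)\times B_k(0,1))^N$ with the explicit constant $L=\frac1N(1+\suf+t+|\suf+t-1|)$ obtained by bounding $\partial G/\partial x_i$ and $\partial G/\partial y_i$ via the triangle inequality and $\|w\|_2\leq 1$. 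Ledoux's corollary is then applied to $F=G/L$ with $D^2=16N$ and $r=-\E(F)$, whose explicit computation produces the $\frac{1}{2N}$ correction terms; the lower tail follows by replacing $G$ by $-G$ and $t$ by $-t$, the hypothesis on $t$ guaranteeing $r\geq 0$. Without this reformulation of the tail event, your step (iii) has no rigorous completion.
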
 

\begin{rmk}
Note that the use of Corollary 1.17 of Ledoux \cite{Ledoux01} leads to  bounds  improving the one obtained in \cite{pickfreeze}.
\end{rmk}

\begin{proof}
Since $S^{\bf u}(f)$ and $S_{{\bf u},N}$ are invariant by homothety, one can scale the output $Y$ so that $Y\in B_k(0,1)$, the unit Euclidean ball of $\R^k$. From now on, we assume that $Y \in B_k(0,1)$ and $\rho=1$.

By Remark \ref{rem:SN}, one gets
\begin{equation}
\label{e:superastuce}
\P\left(S_{{\bf u},N}-S^{\bf u}(f)\geq t\right)=
\P\left(\Tr \left(C_{{\bf u},N}\right)-(S^{\bf u}(f)+t)\Tr \left(\Sigma_N\right)\geq 0\right). 
\end{equation}

\noi
Now let $G: \left(B_k(0,1)\times B_k(0,1)\right)^N \rightarrow \R$ defined by
\begin{multline*}
\!\!G((x_1,y_1),\ldots,(x_N,y_N))=\sum_{l=1}^k \Biggl[\frac{1}{N}\sum_{i=1}^N \left(x_{i,l}y_{i,l}-(S^{\bf u}(f)+t)\frac{(x_{i,l})^2+(y_{i,l})^2}{2}\right)^{}\\
+ (S^{\bf u}(f)+t-1)\left(\frac{1}{N}\sum_{i=1}^N \frac{x_{i,l}+y_{i,l}}{2}\right)^2\Biggr],
\end{multline*}
with $x_i=(x_{i,l})_{l=1,\ldots,k}$ and $y_i=(y_{i,l})_{l=1,\ldots,k}$ for all $i=1,\ldots,N.$

A simple computation gives that
\[ G\left(\left(Y_1,Y_1^{\bf u}\right),\ldots,\left(Y_N,Y_N^{\bf u}\right)\right) = \Tr \left(C_{{\bf u},N}\right)-(S^{\bf u}(f)+t)\Tr \left(\Sigma_N\right). \]
We have:
\[ \frac{\partial G}{\partial x_{i}}=\left( \frac{\partial G}{\partial x_{i,l}} \right)_{l=1,\ldots,k}=   \frac 1 N \left( y_i - \left(\suf +t\right) x_i + \left(\suf+t-1\right) w \right) \]
and symmetrically
\[ \frac{\partial G}{\partial y_{i}}=\left( \frac{\partial G}{\partial y_{i,l}} \right)_{l=1,\ldots,k}=   \frac 1 N \left( x_i - \left(\suf +t\right) y_i + \left(\suf+t-1\right) w \right), \]
where
\[
w = \frac 1 N \sum_{r=1}^N \frac{ x_r + y_r }{2}.
\]
Applying several times the triangular inequality and that $\left\|w\right\|_2 \leq 1 $, we deduce
\[
\left\| \frac{\partial G}{\partial x_i} \right\|_2
 \leq \frac 1 N (|1+\suf+t|+|\suf+t-1|)
\]
and
\[
\left\| \frac{\partial G}{\partial y_i} \right\|_2
 \leq \frac 1 N (|1+\suf+t|+|\suf+t-1|).
\]
Thus, $G$ is $L$-Lipschitz with $L:=\frac 1N (1+\suf+t+|\suf+t-1|)$.\\

\noi
Now we apply Corollary 1.17 of Ledoux \cite{Ledoux01} with

\begin{itemize}
\item[$\bullet$] $X_i=B_k(0,1)\times B_k(0,1)$ endowed with the metric $d_i$ defined by
$$d_i(z,z'):=\|x-x'\|_2+\|y-y'\|_2$$ 
for $z=(x,y)\in X_i$ and $z'=(x',y')\in X_i$, $x$, $x'$, $y$ and $y' \in B_k(0,1)$,
\item[$\bullet$] $X=X_1\times \ldots \times X_N$, with the $l^1$-metric $d=\sum_{i=1}^N d_i$,
\item[$\bullet$] $D_i=diam(X_i)=2+2=4$ and $D^2=\sum_{i=1}^N D_i^2=16N$,
\item[$\bullet$] $F=G/L$,
\item[$\bullet$] $r=-\E(F)=\left[t-\frac{1}{2N}(S^{\bf u}(f)+t-1)(S^{\bf u}(f)+1)\right]\sum_{l=1}^k v_l/L \geq 0$ as $0\leq\suf\leq 1$.
\end{itemize}
We then get the upper deviation bound of \eqref{e:superastuce}.

To get the second bound, we repeat the procedure by replacing $G$ (respectively $t$) by $-G$ (resp. $-t$). Note that in this case, we take
\[ r = - \left( -t - \frac{1}{2N}(\suf-t-1)(\suf+1) \right)\sum_{l=1}^k v_l/L, \]
which is non-negative thanks to the minoration hypothesis on $t$.
\end{proof}

\noi
The bounds in Proposition \ref{prop_concentr} depend on the unknown quantity $S^{\bf u}(f)$ which can not be computed in practice. To address this problem, we use the bound $0 \leq S^{\bf u}(f) \leq 1$ to get:

\begin{cor}
\label{cor:concentr_pire}
Let $V=\left(\sum_{l=1}^k v_l \right)^2$. We have 
\begin{align}
\label{eq:concentr_pire1}
&\forall t \geq0,\;\; \P\left(S_{{\bf u},N}-S^{\bf u}(f)\geq t\right)  \leq \exp\left(- \frac{NV}{128} \left(1-\frac1N\right)^{2} \left(\frac{t}{1+t}\right)^2  \right), \\
\label{eq:concentr_pire2}
&\forall t \in \left] \frac{9}{8N} , 1 \right[, \;\; \P\left(S_{{\bf u},N}-S^{\bf u}(f)\leq -t\right)  \leq \exp\left( - \frac{NV}{128}\left(t - \frac{9}{8N} \right)^2 \right).
\end{align}
\end{cor}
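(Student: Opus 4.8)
The goal is to derive Corollary~\ref{cor:concentr_pire} from Proposition~\ref{prop_concentr} by replacing the unknown $\suf$ with the worst case over $[0,1]$. The plan is to treat the two tails separately, in each case isolating the prefactor in front of $\sum_l v_l$ inside the exponential of Proposition~\ref{prop_concentr} and bounding it from below (so that the negative exponent becomes more negative, i.e.\ the bound stays valid) uniformly in $\suf\in[0,1]$.

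\textbf{Upper tail \eqref{eq:concentr_pire1}.} Fix $t\geq 0$ and set $s=\suf\in[0,1]$. From Proposition~\ref{prop_concentr} the exponent is $-\frac{N}{32}(A(s)\sum_l v_l)^2$ with
\[
A(s)=\frac{t-\frac{1}{2N}(s+t-1)(s+1)}{1+s+t+|s+t-1|}.
\]
First I would control the denominator: for $s\in[0,1]$ one has $1+s+t+|s+t-1|\leq 2(1+t)$ (indeed if $s+t\geq 1$ the denominator equals $2(s+t)\leq 2(1+t)$, and if $s+t\leq 1$ it equals $2\leq 2(1+t)$). Next I would control the numerator: the function $s\mapsto t-\frac{1}{2N}(s+t-1)(s+1)$ is concave in $s$, hence on $[0,1]$ its minimum is attained at an endpoint; checking $s=0$ gives $t-\frac{1}{2N}(t-1)=t(1-\frac{1}{2N})+\frac{1}{2N}\geq t(1-\frac1N)$, and $s=1$ gives $t-\frac{1}{N}t= t(1-\frac1N)$, so in all cases the numerator is $\geq t(1-\frac1N)$ (this also stays nonnegative). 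Combining, $A(s)\geq \frac{t(1-1/N)}{2(1+t)}$, whence $(A(s)\sum_l v_l)^2\geq \frac{V}{4}(1-\frac1N)^2\left(\frac{t}{1+t}\right)^2$, and multiplying by $-N/32$ yields exactly \eqref{eq:concentr_pire1}.

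\textbf{Lower tail \eqref{eq:concentr_pire2}.} Now $t\in\left]\tfrac{9}{8N},1\right[$; note this forces $t$ to exceed the hypothesis $\frac{(1-\suf)(1+\suf)}{2N-(1+\suf)}$ of Proposition~\ref{prop_concentr} for every $\suf\in[0,1]$, since that quantity is at most $\frac{1}{2N-1}\leq \frac{9}{8N}$ for $N$ large enough (one should check the elementary inequality $\frac{1}{2N-1}\leq\frac{9}{8N}$, valid for all $N\geq1$), so the bound applies. Here the relevant coefficient is
\[
B(s)=\frac{t+\frac{1}{2N}(s-t-1)(s+1)}{1+s-t+|s-t-1|}.
\]
Since $0\leq s\leq 1$ and $0<t<1$ we have $s-t-1<0$, so $|s-t-1|=1+t-s$ and the denominator equals $2(1-t)\leq 2$. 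For the numerator, $(s-t-1)(s+1)=-(1+t-s)(1+s)$ is negative and $s\mapsto -(1+t-s)(1+s)$ is concave, so again the minimum over $[0,1]$ is at an endpoint: $s=0$ gives $-(1+t)$, $s=1$ gives $-2(t)$... more precisely at $s=1$ it is $-2t$ and at $s=0$ it is $-(1+t)$; since $t<1$, $-(1+t)<-2t$ is false in general, so the worst (most negative) value is $\min(-(1+t),-2t)=-(1+t)$ when $t<1$. Hence the numerator is $\geq t-\frac{1+t}{2N}\geq t-\frac{9}{8N}\cdot\frac{1+t}{(9/8N)\cdot\text{(something)}}$; cleanly, using $1+t<2$, the numerator is $\geq t-\frac{1}{N}> t-\frac{9}{8N}>0$, and I would verify the slightly sharper bookkeeping needed to land exactly on $t-\frac{9}{8N}$. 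Dividing by the denominator $\leq 2$ gives $B(s)\geq \frac12\left(t-\frac{9}{8N}\right)$, so $(B(s)\sum_l v_l)^2\geq \frac{V}{4}\left(t-\frac{9}{8N}\right)^2$ and the exponent $-\frac{N}{32}$ times this yields \eqref{eq:concentr_pire2}.

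\textbf{Main obstacle.} The only delicate point is the second tail: getting the constant exactly right in the threshold $\frac{9}{8N}$ and in the final exponent requires carefully choosing which endpoint $s\in\{0,1\}$ realises the worst case in the numerator and simultaneously checking that the Proposition~\ref{prop_concentr} hypothesis $t\geq\frac{(1-\suf)(1+\suf)}{2N-(1+\suf)}$ holds across all $\suf$; the algebra linking $\frac{1}{2N-1}$, the numerator correction $\frac{1+t}{2N}$, and the target $\frac{9}{8N}$ must be reconciled, whereas the monotonicity/concavity arguments themselves are routine.
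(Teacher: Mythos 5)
Your treatment of the upper tail \eqref{eq:concentr_pire1} is correct and essentially equivalent to the paper's: where the paper splits into the cases $\suf\le 1-t$ and $\suf\ge 1-t$ to resolve the absolute value, you bound the denominator by $2(1+t)$ and minimize the genuinely concave numerator $s\mapsto t-\frac{1}{2N}(s+t-1)(s+1)$ at the endpoints of $[0,1]$; both routes yield the same constant $\frac14\left(1-\frac1N\right)^2\left(\frac{t}{1+t}\right)^2$.

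The lower tail \eqref{eq:concentr_pire2}, however, contains a genuine error, and it sits exactly where the constant $\frac{9}{8N}$ is generated. The map $s\mapsto (s-t-1)(s+1)=s^2-ts-t-1$ is \emph{convex}, not concave, so its minimum over $[0,1]$ is attained at the interior critical point $s=t/2$, with value $-(1+t/2)^2$, and not at an endpoint with value $-(1+t)$. Consequently your claimed numerator bound $t-\frac{1+t}{2N}\ge t-\frac1N$ is false: for $\suf$ near $t/2$ the numerator equals $t-\frac{(1+t/2)^2}{2N}$, and $(1+t/2)^2>2$ as soon as $t>2(\sqrt2-1)\approx 0.83$. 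The correct worst case is $t-\frac{(1+t/2)^2}{2N}\ge t-\frac{9}{8N}$ using $t\le 1$, which is precisely the paper's argument and is the origin of the constant $\frac98=\frac12\left(\frac32\right)^2$; the "sharper bookkeeping" you defer is thus not bookkeeping but the essential step. Two smaller slips: the denominator $1+\suf-t+\ab{\suf-t-1}$ equals exactly $2$, not $2(1-t)$ (harmless, since you only use that it is $\le 2$); and the applicability threshold $\frac{(1-\suf)(1+\suf)}{2N-(1+\suf)}$ is not uniformly bounded by $\frac{1}{2N-1}$ --- the function $s\mapsto\frac{1-s^2}{2N-1-s}$ is increasing at $s=0$, so its maximum over $[0,1]$ exceeds $\frac{1}{2N-1}$, although it does remain below $\frac{9}{8N}$ for $N\ge 2$ (a point the paper itself leaves implicit).
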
 

\begin{proof}
\begin{enumerate}
\item Proof of \eqref{eq:concentr_pire1}: by Proposition \ref{prop_concentr} we have
\[ \P \left( S_{{\bf u},N}-S^{\bf u}(f) \geq t\right) \leq \exp\left(- \frac{N V H(t,\suf)}{32}  \right), \]
with 
\[ H(t,\suf)=\left( \frac{ t - \frac{1}{2N} (\suf+t-1)(\suf+1) }{{1+\suf+t}+|\suf+t-1|} \right)^2. \]
If $0\leq \suf\leq 1-t$ we have
\[
H(t,\suf)=\frac 14\left(  t - \frac{1}{2N} (\suf+t-1)(\suf+1)  \right)^2,
\]
and so
\[ H(t,\suf) \geq \frac{t^2}{4}\geq\frac 1 4 \left(\frac{t}{(1+t)}\right)^{2}\left(1-\frac 1N\right)^{2}, \]
as, in this case,
\[ \frac{1}{2N}(\suf+t-1)(\suf+1) \leq 0. \]
Now if  $\suf\geq 1-t,$  
$$
H(t,\suf)=\frac14\left( \frac{ t - \frac{1}{2N} (\suf+t-1)(\suf+1) }{\suf+t} \right)^2.
$$
and we have
\[ \frac{ t - \frac{1}{2N} ( \suf + t - 1 )( \suf + 1 ) }{\suf+t} \geq \frac{ t - \frac t N}{1+t} \geq 0, \]
hence in this case
$$
H(t,\suf)\geq\frac14\left(\frac{t}{1+t}\right)^{2}\left(1-\frac 1N\right)^{2}.
$$
Finally, in all cases, we have \eqref{eq:concentr_pire1}.
\item  Proof of \eqref{eq:concentr_pire2}: hypothesis $t\geq 9/(8N)$ ensures that the second part of Proposition \ref{prop_concentr} can be applied. We have
\[ \P \left( S_{{\bf u},N}-S^{\bf u}(f) \leq -t\right) \leq \exp\left(- \frac{N V H(t,\suf)}{32}  \right), \]
with 
\[ H(t,\suf)=\left( \frac{ t + \frac{1}{2N} (\suf-t-1)(\suf+1) }{{1+\suf-t}+|\suf-t-1|} \right)^2. \]
As necessarily $\suf-t-1 \leq 0$, we have:
\[ H(t,\suf) = \frac{1}{4} \left( t+\frac{1}{2N}(\suf-t-1)(\suf+1)\right)^2, \]
and, as $(\suf-t-1)(\suf+1)$ is minimal when $\suf=t/2$, we have
\[ t+\frac{1}{2N} ( \suf - t - 1 )(\suf+1) \geq t - \frac{1}{2N} \left(\frac{t}{2}+1\right)^2 \geq t - \frac{9}{8N},\]
as $t \leq 1$. Hence
\[ H(t,\suf) \geq \frac{1}{4} \left( t-\frac{9}{8N} \right)^2, \]
which completes the proof. \qedhere

\end{enumerate}
\end{proof}


\subsection{Numerical illustrations}
\label{sec:numill}

In this section, we provide numerical simulations for the sensitivity indices $S^{\bf u}(f)$ defined in Section \ref{sec:definition}.

\subsubsection{Toy example}
We consider again Example \ref{ex2} with $k=p=2$, $a=2$ and $b=3$ which leads to the following model
\[ Y = f(X_1, X_2) = \begin{pmatrix} X_1 + X_2 + X_1 X_2 \\ 2 X_1 + 3 X_1 X_2 + X_2 \end{pmatrix}. \]

In the ``Gaussian case'' (respectively ``Uniform case''), we take $X_1$ and $X_2$ independent standard Gaussian random variables (resp. independent uniform random variables on $[0;1]$).
In these two cases, a simple analytic calculus yields the true values of the sensitivity indices $S^1(f)$ and $S^2(f)$.

\paragraph*{Asymptotic confidence interval}

We perform 100 simulations of the estimated Pick and Freeze confidence interval given by Proposition \ref{prop:an} for $N=100, 200$ and $10 000$. In each case, we estimate the coverage of the 95\% confidence interval procedure by counting the proportion of estimated intervals containing the true value.

The results are gathered in Table \ref{t:1}. We see that the estimated coverages are close to the theoretical level of $95\%$ with a coverage higher than the theoretical one in the Uniform case and lesser in the Gaussian case.

%

\begin{table}
\centering
\begin{tabular}{|c|c|c|c|c||c|}
\cline{3-6}
\multicolumn{2}{c|}{} & $N$=100 & $N$=2000 & $N$=10000 & True value \\ \hline
\multirow{2}{*}{Gaussian case} & $S^1(f)$ & 0.97 & 0.94 & 0.97 & 0.2941 \\
                               & $S^2(f)$ & 0.94 & 0.93 & 0.93 & 0.1176 \\ \hline
\multirow{2}{*}{Uniform case} & $S^1(f)$ & 1 & 1 & 1 & 0.6084 \\
                               & $S^2(f)$ & 0.97 & 0.98 & 0.97 & 0.3566 \\ \hline

\end{tabular}
\caption{Estimated coverages of the 95\% confidence intervals for $S^1(f)$ and $S^2(f)$. }
\label{t:1}
\end{table}

\paragraph*{Concentration inequality}
We notice that the concentration inequality (Proposition \ref{prop_concentr}) can not be applied to the Gaussian case since $\norm{Y}_2$ is not bounded. Hence we only study the Uniform case.

For different values of $t$, we compute the (estimated) smallest $N$ so that the upper bound 
 of $ P(|S_{\bf u, N}-S^{\bf u}(f)| \geq t) $ of Corollary \ref{cor:concentr_pire} achieves 5\% (i.e., the sum of the right-hand sides of \eqref{eq:concentr_pire1} and \eqref{eq:concentr_pire2} is less than $0.05$). The constant $V$ is estimated empirically. The results of these computations are displayed in Figure \ref{fig:concentr}. The set ${\bf u}$ is $\{1 \}$ or $\{ 2 \}$.

\begin{figure}
\centering
\includegraphics[scale=.5]{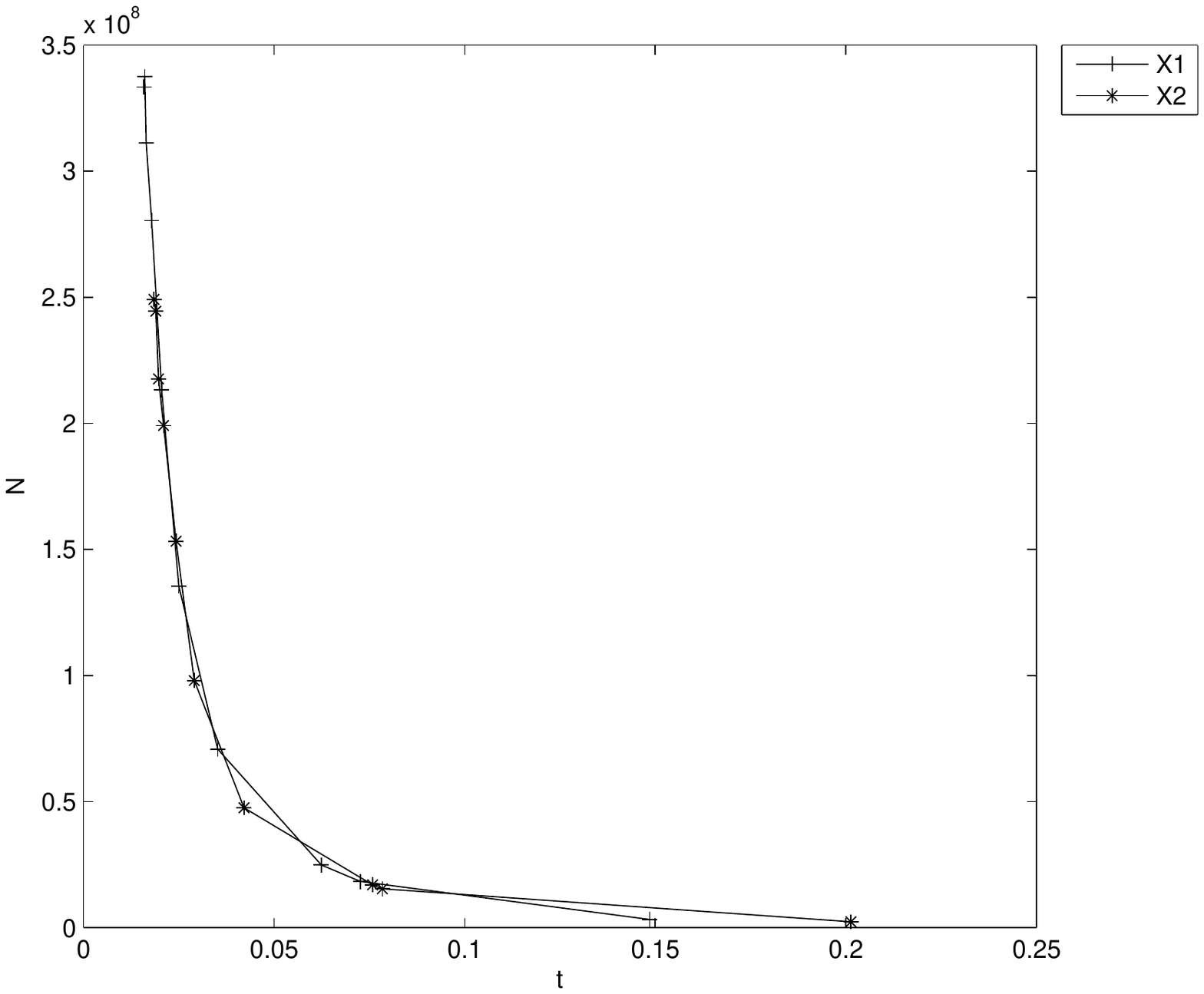}
\caption{(Estimated) smallest $N$ to have $ P(|S_{\bf u, N}-S^{\bf u}(f)| \geq t)\leq 5\% $ for ${\bf u}=\{1\}$ or $\{2\}$. }
\label{fig:concentr}
\end{figure}

These results clearly show that the confidence intervals produced by the use of the concentration inequality
 on $S^{\bf u}(f)$ require a large sample size. As a consequence, its use is only possible when many evaluations of the output function $f$ are available.

\subsubsection{Mass-spring model}
In this section, we consider the displacement $x(t)$ of a mass connected to a spring for $t \in [0;40]$. This displacement is given by the following second-order differential equation
\[ m x''(t) + c x'(t) + k x(t) = 0, \]
together with initial conditions $x(0)=l$, $x'(0)=0$. We use the readily-available analytical closed-form expression of this initial-value problem for $x(t)$.

The input parameters are $X=(m,c,k,l)$ (so that $p=4$) whose interpretations and distributions are given in Table \ref{t:2}.

The output vector is defined by
\[ Y = f(X) = \left( x(t_1), x(t_2), \ldots, x(t_{800}) \right),\quad \text{ for }\quad t_i=0.05 i \quad \text{ and }\quad k=800.\]

\begin{table}
\centering
\begin{tabular}{|c|c|l|}
\hline
Variable & Interpretation (SI unit) & \quad Distribution \\ \hline
$m$ & mass ($\text{kg}$) & \text{Unif}([10;12]) \\ 
$c$ & damping constant ($\text{N}\cdot \text{m}^{-1}\cdot \text{s}$) & \text{Unif}([0.4; 0.8]) \\ 
$k$ & spring constant ($\text{N} \cdot \text{m}^{-1}$) & \text{Unif}([70;90]) \\ 
$l$ & initial elongation ($\text{m}$) & \text{Unif}([-1; -0.25]) \\ \hline
\end{tabular}
\caption{Interpretations and distributions of the parameters in the spring-mass model. }
\label{t:2}
\end{table}

\paragraph*{Unidimensional first-order Sobol indices}
By considering each component of $Y$ independently, it is possible to estimate the (unidimensional first-order) Sobol indices of $Y(t_i)$ for $i=1,\ldots,800$ and each input variable. This gives the plot of Figure \ref{fig:ponctSob}. 

\begin{figure}
\centering
\includegraphics[scale=.8]{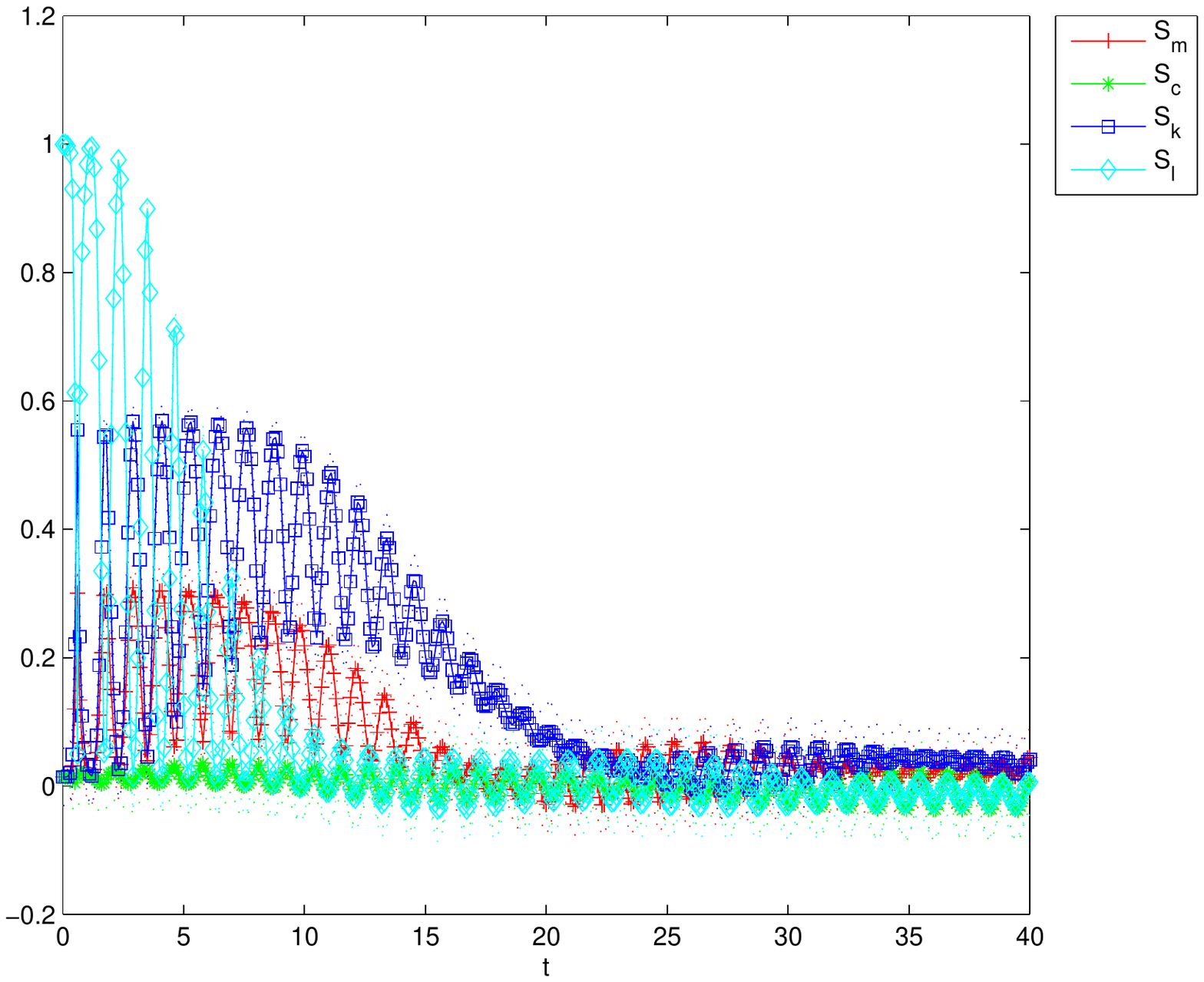}
\caption{Plots of the estimated unidimensional first-order Sobol indices $S^{\bf u}(Y(t))$ as functions of $t$ for ${\bf u}=\{m\}$, $\{c\}$, $\{k\}$ or $\{l\}$. The dots around each curve form the hull of the 95\% confidence intervals for these indices with $N=2000$.}
\label{fig:ponctSob}
\end{figure}

This plot seems difficult to interpret since we can see that the indices for $l$, $k$ and $m$ oscillate rapidly, leading to a frequent change of their respective rankings as time evolves. This is an additional motivation for using the generalized Sobol indices considered in this paper, easier to interpret. Note that, for large values of $t$, the first-order indices do not sum up to 1 meaning that interactions between the variables have a large influence for such $t$'s.

\paragraph*{Generalized Sobol indices}
We have computed the generalized Sobol indices for the output vector $Y$, for ${\bf u}=\{m\}$, $\{c\}$, $\{k\}$ or $\{l\}$ as well as their 95\% confidence intervals for $N=2000$. The numerical results are gathered in Table \ref{t:gensob}.

\begin{table}
\centering
\begin{tabular}{|c|c|c|}
\hline
Variable $\bf u$ & Punctual estimate for $S^{\bf u}(Y)$ & 95\%  confidence interval for $S^{\bf u}(Y)$ \\ \hline
$m$ & 0.0826 & [0.0600 ; 0.1052] \\ 
$c$ & 0.0020 & [-0.0181; 0.0222] \\ 
$k$ & 0.2068 & [0.1835 ; 0.2301] \\ 
$l$ & 0.0561 & [0.0328 ; 0.0794] \\ \hline
\end{tabular}
\caption{Results of the estimation of the first-order generalized Sobol indices in the spring-mass model. }
\label{t:gensob}
\end{table}

This computation makes clear that the ranking of the first-order influence indices of each input parameter is $S^{\{k\}}(Y) > S^{\{m\}}(Y) > S^{\{l\}}(Y) > S^{\{c\}}(Y)$.

\section{Case of functional outputs}\label{sec:functional}
In many practical situations the output $Y$ is  functional. It is then useful to extend the vectorial indices to functional outputs. This is the aim of the following section.

\subsection{Definition}

Let $\mathbb H$ be a separable Hilbert space endowed with the scalar product $\langle \cdot,\cdot \rangle$ and the norm $||\cdot||$. Let $f$ be a $\mathbb H$-valued function, i.e. $Y$ and $Y^{\bf u}$ are $\mathbb H$-valued  random variable. 
We assume that $\mathbb E \left(\|Y\|^2\right)<\infty$. Recall that $\mathbb E \left(Y\right)$ is defined by duality as the unique member of $\mathbb{H}$ satisfying 
$$\mathbb E \left(\langle h,Y\rangle\right)=\langle h,\E(Y)\rangle \quad \trm{for all} \quad h\in \mathbb{H}.$$
Recall that the covariance operator associated with $Y$ is the endomorphism $\Gamma$ on $\mathbb{H}$ defined, for  $h\in \mathbb{H}$ by $\Gamma(h)=\mathbb E \cro{ \langle Y,h\rangle Y}$ . We also recall that it is a well known fact that  $\mathbb E \left(\|Y\|^2\right)<\infty$ implies that $\Gamma$ is then a Trace class operator and its trace is then well defined.
We generalize the definition  of $S^{\bf u}(f)$ introduced in Section \ref{sec:definition} for functional outputs:

\begin{defi}
$S^{\bf u,\infty}(f)= \frac{\Tr(\Gamma_{\bf u})}{\Tr(\Gamma)},$ where  $\Gamma_{\bf u}$ is the endomorphism on $\mathbb{H}$ defined by $\Gamma_{\bf u}(h)=\mathbb E \cro{ \langle Y^{\bf u},h\rangle Y}$ for any $h\in \mathbb{H}$.
\end{defi}

In the next lemma we give the so-called polar decomposition of the traces of $\Gamma$ and $\Gamma_{\bf u}$.

\begin{lem}
We have
\begin{align*}
\Tr(\Gamma)&=\E\left(\|Y\|^2\right)-\|\E(Y)\|^2\\
 \Tr(\Gamma_{\bf u})&=\frac 14 \cro{\E\left(\|Y+Y^{\bf u}\|^2 \right)-\E\left(\|Y-Y^{\bf u}\|^2\right)-4\|\E\left(Y\right)\|^2}.
\end{align*}
\end{lem}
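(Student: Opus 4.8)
The plan is to compute both traces by expanding everything in a fixed orthonormal basis of $\mathbb H$ and using the defining relations for $\E(Y)$, $\Gamma$, and $\Gamma_{\bf u}$, then to recognize the resulting scalar expressions via the polarization identity. First I would fix an orthonormal basis $(e_n)_{n\geq 1}$ of $\mathbb H$. By definition $\Tr(\Gamma)=\sum_n \langle \Gamma(e_n),e_n\rangle = \sum_n \E[\langle Y,e_n\rangle^2] = \E\big[\sum_n \langle Y,e_n\rangle^2\big] = \E(\|Y\|^2)$ when $\E(Y)=0$; for the general (centered) version one applies this to $Y-\E(Y)$, which is exactly what the covariance operator encodes, giving $\Tr(\Gamma)=\E(\|Y-\E(Y)\|^2)=\E(\|Y\|^2)-\|\E(Y)\|^2$ after expanding the square and using $\E(\langle Y,h\rangle)=\langle \E(Y),h\rangle$. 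The interchange of $\sum_n$ and $\E$ is justified by Tonelli since the summands are nonnegative, and $\E(\|Y\|^2)<\infty$ guarantees finiteness (this is also the standard fact that $\Gamma$ is trace class).

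Next I would treat $\Tr(\Gamma_{\bf u})$ similarly: $\Tr(\Gamma_{\bf u})=\sum_n \langle \Gamma_{\bf u}(e_n),e_n\rangle = \sum_n \E[\langle Y^{\bf u},e_n\rangle\langle Y,e_n\rangle] = \E[\langle Y^{\bf u},Y\rangle]$, again justified by an $L^2$/Cauchy--Schwarz argument: $\E|\langle Y^{\bf u},Y\rangle| \leq \E(\|Y^{\bf u}\|\,\|Y\|)<\infty$ and termwise $\sum_n |\langle Y^{\bf u},e_n\rangle\langle Y,e_n\rangle|\leq \|Y^{\bf u}\|\|Y\|$ by Cauchy--Schwarz in $\ell^2$, so dominated convergence applies. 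Here I should be careful that $Y$ and $Y^{\bf u}$ have the same mean $\E(Y)=\E(Y^{\bf u})$ (since $Y^{\bf u}=f(X_{\bf u},X'_{\sim\bf u})$ has the same law as $Y$), so the centered version reads $\Tr(\Gamma_{\bf u})=\E[\langle Y-\E(Y),\,Y^{\bf u}-\E(Y)\rangle] = \E\langle Y,Y^{\bf u}\rangle - \|\E(Y)\|^2$.

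Finally I would apply the polarization identity $\langle u,v\rangle = \tfrac14(\|u+v\|^2-\|u-v\|^2)$ with $u=Y-\E(Y)$, $v=Y^{\bf u}-\E(Y)$: then $u+v = Y+Y^{\bf u}-2\E(Y)$ and $u-v = Y-Y^{\bf u}$, so
\[
\Tr(\Gamma_{\bf u}) = \tfrac14\Big(\E\|Y+Y^{\bf u}-2\E(Y)\|^2 - \E\|Y-Y^{\bf u}\|^2\Big).
\]
Expanding $\|Y+Y^{\bf u}-2\E(Y)\|^2 = \|Y+Y^{\bf u}\|^2 - 4\langle Y+Y^{\bf u},\E(Y)\rangle + 4\|\E(Y)\|^2$ and taking expectations, the middle term becomes $-4\langle \E(Y)+\E(Y^{\bf u}),\E(Y)\rangle = -8\|\E(Y)\|^2$, so the bracket is $\E\|Y+Y^{\bf u}\|^2 - 4\|\E(Y)\|^2$, yielding the claimed formula $\Tr(\Gamma_{\bf u})=\tfrac14[\E\|Y+Y^{\bf u}\|^2 - \E\|Y-Y^{\bf u}\|^2 - 4\|\E(Y)\|^2]$. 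The only genuine subtlety is the basis-independence / summability justifications (Tonelli for $\Gamma$, dominated convergence for $\Gamma_{\bf u}$) and keeping track of the fact that $Y$ and $Y^{\bf u}$ share the same mean; the rest is the polarization identity and a routine expansion of squared norms. I expect the main obstacle to be purely bookkeeping — making sure the $\|\E(Y)\|^2$ terms combine with the right coefficients — rather than anything conceptual.
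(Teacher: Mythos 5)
Your proof is correct. The paper actually states this lemma with no proof at all, and your argument --- expanding $\Tr(\Gamma)$ and $\Tr(\Gamma_{\bf u})$ in an orthonormal basis, justifying the interchange of sum and expectation (Tonelli for the nonnegative terms, Cauchy--Schwarz plus domination for the cross term), and then applying the polarization identity to the centered variables --- is exactly the standard computation the authors presumably had in mind; your bookkeeping of the $\|\E(Y)\|^2$ terms, which uses $\E(Y^{\bf u})=\E(Y)$, checks out. One point worth making explicit: taken literally, the paper defines $\Gamma(h)=\E\left[\langle Y,h\rangle Y\right]$ and $\Gamma_{\bf u}(h)=\E\left[\langle Y^{\bf u},h\rangle Y\right]$ \emph{without} centering, under which one would get $\Tr(\Gamma)=\E\left(\|Y\|^2\right)$ and $\Tr(\Gamma_{\bf u})=\E\langle Y,Y^{\bf u}\rangle$ with no $-\|\E(Y)\|^2$ corrections; your reading that the covariance operators are the centered ones is the only one consistent with the stated identities, so you have in effect silently repaired a small inconsistency in the paper's definitions rather than introduced a gap.
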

\noi
Let $(\varphi_l)_{1\leq l}$ be an  orthonormal basis of $\mathbb H$. Then
\[
\|Y\|^2=\sum_{i=1}^{\infty} \langle Y,\varphi_i \rangle^2.
\]
\noi
Now, in view of estimation, we truncate the previous sum by setting 
\[
\|Y\|^2_{m}=\sum_{i=1}^{{m}} \langle Y,\varphi_i \rangle^2.
\]
\begin{rmk}
It amounts to truncate the expansion of $Y$ to a certain level ${m}$. Let $Y_{m}$ be the truncated approximation of $Y$:
\[iid
Y_{m}=\sum_{l=1}^{m}  \langle Y,\varphi_i \rangle \varphi_l,
\]
seen as a vector of dimension ${m}$, and results of Section \ref{sec:estimation} can be applied to $Y_m$. Notice that $Y_{m}$ is than the projection of $Y$ onto Span$\left(\varphi_1,\ldots, \varphi_m\right)$.
\end{rmk}

\subsection{Estimation of $S^{\bf u,\infty}(f)$}
As in Section \ref{sec:estimation}, we define the following estimator of  $S^{\bf u,\infty}(f)$:

$$
S_{{\bf u},{m},N} = \frac{\frac1{4N}\sum_{i=1}^{N}\left( \|Y_{i}+Y_{i}^{\bf u}\|_{{m}}^2-\|Y_{i}-Y_{i}^{\bf u}\|_{{m}}^2-\|\overline{Y}+\overline{Y^{\bf u}}\|_{{m}}^2\right)}
{\frac1{N}\sum_{i=1}^{N}\left( \frac{\|Y_{i}\|_{{m}}^2+\|Y_{i}^{\bf u}\|_{{m}}^2}2-\left\|\frac{\overline{Y}+\overline{Y^{\bf u}}}{2}\right\|_{{m}}^2\right)}  .   
$$

Let $T$ be a $\mathbb H$-valued random variable. For any sequence $(T_{i})_{i\in\N^{*}}$ of iid variables distributed as $T$,  we define 
$$D_{N, m}(T)=\frac1{N}\sum_{i=1}^{N}\left( \|T_{i}\|_{{m}}^2-\left\|\overline{T}\right\|_{{m}}^2\right)$$
and
\begin{align*}
e_{j}&=\E\left(\langle T_{i},\varphi_{j}\rangle\right) \\
v_{j}&=\E\left(\langle T_{i},\varphi_{j}\rangle^{2}\right)\\
Z_{i,j}&=\langle T_{i},\varphi_{j}\rangle-e_{j}\\
W_{i,j}&=\langle T_{i},\varphi_{j}\rangle^{2}-v_{j}.
\end{align*}

In the spirit of \cite{fort2012estimation}, we decompose $D_{N, m}(T)$ and give asymptotics for each of the terms of the decomposition.

\begin{proposition}\label{prop:DN}
$\;$
\begin{enumerate}
\item $D_{N, m}(T)$ can be rewritten as the sum of a totally degenerated U-statistic of order 2, a centered linear
 term and a deterministic term in the following way
\begin{equation}\label{decomp_est}
D_{N, m}(T)-\E\left(\| T\|^{2}\right)+\| \E\left(T\right)\|^{2}=-U_NK(T)+P_NL(T)-B_{{m}}(T)
\end{equation}
where
{\setlength\arraycolsep{2pt}
\begin{eqnarray*}
U_NK(T)&:=&\sum_{l=1}^{{m}}\frac{1}{N^{2}}\sum_{1\leq i\neq j\leq N}Z_{i,l}Z_{j,l}\\
P_NL(T)&:=&\frac{1}{N}\left(1-\frac{1}{N}\right)\sum_{l=1}^{{m}}\sum_{i=1}^{ N}\left(W_{i,l}-2e_{l}Z_{i,l}\right)\\
B_{m}(T)&:=&\sum_{l>{m}}\left(v_{l}-e_l^{2}\right)+\frac 1N \sum_{l=1}^{{m}} \left(v_{l}-e_l^{2}\right).\\
\end{eqnarray*}}
\item Assume that there exists $\delta >1$ so that
\begin{equation}\label{eq:hyp_T}
v_l= \E\left(\langle T,\varphi_{l}\rangle^{2}\right)=O(l^{-(\delta+1)}) 
\end{equation}iid
and $\delta'>1$ so that
\begin{equation}\label{eq:hyp_T4}
 \E\left(\pscal{T,\phi_l}^4\right)=O(l^{-\delta'}).
\end{equation}
Then for any $m=m(N)$ so that:
\begin{equation}\label{eq:choixm}
\frac{m(N)}{N^{\frac{1}{2\delta}}} \rightarrow +\infty , \;\;  \frac{m(N)}{\sqrt N} \rightarrow 0,
\end{equation}
we have
\begin{enumerate}
\item $ B_{m}^2(T)=o\left(1/N\right)$
\item $\E\left((U_NK(T))^2\right)=o\left(1/N\right)$
\item $ P_NL(T)-P_NL'(T)=o_{\P}\left(\frac{1}{\sqrt{N}}\right)$
\end{enumerate}
where $P_NL'(T):=\frac{1}{N}\left(1-\frac 1N\right)\sum_{l=1}^{\infty}\sum_{i=1}^{N}\Big[W_{i,l}-2e_lZ_{i,l}\Big]$.
\end{enumerate}
\end{proposition}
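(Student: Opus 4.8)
\textbf{Proof proposal for Proposition \ref{prop:DN}.}

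The plan is to prove the two parts separately, starting with the algebraic identity \eqref{decomp_est} and then establishing the asymptotic estimates under the polynomial decay hypotheses.

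\emph{Part 1 (the decomposition).} I would work coordinate by coordinate in the basis $(\varphi_l)$. Write $\langle T_i,\varphi_l\rangle = e_l + Z_{i,l}$ with $\E(Z_{i,l})=0$ and $\E(Z_{i,l}^2) = v_l - e_l^2$. First expand $\|T_i\|_m^2 = \sum_{l=1}^m \langle T_i,\varphi_l\rangle^2$ and $\|\overline T\|_m^2 = \sum_{l=1}^m \big(\frac1N\sum_i \langle T_i,\varphi_l\rangle\big)^2$, so that
\[
D_{N,m}(T) = \sum_{l=1}^m\left[\frac1N\sum_{i=1}^N \langle T_i,\varphi_l\rangle^2 - \Big(\frac1N\sum_{i=1}^N\langle T_i,\varphi_l\rangle\Big)^2\right].
\]
Substituting $\langle T_i,\varphi_l\rangle = e_l+Z_{i,l}$, the $e_l$ terms telescope and one is left with a per-coordinate expression in the $Z_{i,l}$ only (the centered empirical variance of the $l$-th coordinate). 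Now use $\langle T_i,\varphi_l\rangle^2 = v_l + W_{i,l} + 2e_l Z_{i,l}$ to reintroduce $W_{i,l}$ where convenient, and expand the square of the sum $\big(\frac1N\sum_i Z_{i,l}\big)^2 = \frac1{N^2}\sum_i Z_{i,l}^2 + \frac1{N^2}\sum_{i\neq j}Z_{i,l}Z_{j,l}$. The double sum over $i\neq j$ is exactly $U_NK(T)$ after summing over $l$; the diagonal $\frac1{N^2}\sum_i Z_{i,l}^2$ combines with $\frac1N\sum_i Z_{i,l}^2$ to produce the $\frac1N(1-\frac1N)$ factor; matching $Z_{i,l}^2$ against $W_{i,l}-2e_lZ_{i,l}$ (using $Z_{i,l}^2 = W_{i,l} - 2e_lZ_{i,l} + $ deterministic) yields $P_NL(T)$ plus the deterministic remainder $\frac1N\sum_{l=1}^m(v_l-e_l^2)$; finally the truncation defect $\E(\|T\|^2)-\|\E(T)\|^2 - \sum_{l=1}^m(v_l-e_l^2) = \sum_{l>m}(v_l-e_l^2)$ supplies the tail part of $B_m(T)$. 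Collecting signs gives \eqref{decomp_est}. This part is purely bookkeeping; I expect no real difficulty, only care with the diagonal/off-diagonal split.

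\emph{Part 2 (asymptotics).} For (a), $B_m(T) \le \sum_{l>m}v_l + \frac1N\sum_{l\ge1}v_l$; by \eqref{eq:hyp_T}, $\sum_{l>m}v_l = O(m^{-\delta})$ and $\sum_{l\ge1}v_l<\infty$, so $B_m(T) = O(m^{-\delta}) + O(1/N)$, and the condition $m/N^{1/(2\delta)}\to\infty$ forces $m^{-\delta} = o(N^{-1/2})$, hence $B_m^2(T) = o(1/N)$. For (b), since $U_NK(T)$ is a sum over $l$ of degenerate U-statistics and the summands for different $l$ are, after centering, orthogonal in $L^2$ (cross terms $\E(Z_{i,l}Z_{j,l}Z_{i',l'}Z_{j',l'})$ vanish unless the index sets match), $\E((U_NK(T))^2) = \sum_{l=1}^m \frac1{N^4}\sum_{i\neq j}\sum_{i'\neq j'}\E(Z_{i,l}Z_{j,l}Z_{i',l'}Z_{j',l'})$; the only nonvanishing terms need $\{i,j\}=\{i',j'\}$ and $l=l'$, giving $\E((U_NK(T))^2) = \frac{2(N-1)}{N^3}\sum_{l=1}^m (v_l-e_l^2)^2 = O(1/N^2)\cdot\sum_l v_l^2 = O(1/N^2)$, which is $o(1/N)$ (here I also use $\E(Z_{i,l}^4)<\infty$ from \eqref{eq:hyp_T4} to control the genuine fourth-moment terms, though they are lower order). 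For (c), $P_NL(T) - P_NL'(T) = -\frac1N(1-\frac1N)\sum_{l>m}\sum_{i=1}^N(W_{i,l}-2e_lZ_{i,l})$; this is a centered sum, so I bound its second moment: $\E\big((P_NL(T)-P_NL'(T))^2\big) = \frac1{N^2}(1-\frac1N)^2 \cdot N \cdot \sum_{l,l'>m}\Cov(W_{1,l}-2e_lZ_{1,l}, W_{1,l'}-2e_{l'}Z_{1,l'})$. Bounding each covariance by Cauchy--Schwarz in terms of $\E(Z_{1,l}^4)^{1/2}$ and $e_l^2$, and using \eqref{eq:hyp_T4} and \eqref{eq:hyp_T} to get summability of the tail, one finds this variance is $o(1/N)$ provided the tail sum $\sum_{l>m}(\cdots)$ is $o(1)$, which holds since $m\to\infty$. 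Then Chebyshev gives (c).

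\emph{Main obstacle.} The delicate point is part 2(c): one must be careful that the double tail sum $\sum_{l,l'>m}\Cov(\cdot,\cdot)$ is controlled by a \emph{single} tail sum (via Cauchy--Schwarz the off-diagonal covariances are bounded by products $\le \frac12(a_l^2+a_{l'}^2)$ with $a_l$ the per-coordinate $L^2$ norm, so the double sum is $\le (\sum_{l>m}a_l)^2$ or, better, $\sum_{l>m}a_l^2 \cdot \#\{\cdots\}$ — one has to argue which), and that the resulting rate beats $1/N$ using only $m\to\infty$ rather than the sharper lower bound on $m$. This is where the two hypotheses \eqref{eq:hyp_T} and \eqref{eq:hyp_T4} and the precise window \eqref{eq:choixm} must be used in concert; the upper constraint $m/\sqrt N\to0$ is what is needed for 2(b) to stay $o(1/N)$ when the fourth-moment (non-Gaussian) contributions to the U-statistic variance are included, and the lower constraint for 2(a). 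I would organize the moment bounds so that each of the three conditions in \eqref{eq:choixm} is invoked exactly where needed.
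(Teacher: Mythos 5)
Your route is the same as the paper's: the coordinatewise expansion of $D_{N,m}(T)$ with $\pscal{T_i,\varphi_l}=e_l+Z_{i,l}$ and the diagonal/off-diagonal split of $\bigl(\frac1N\sum_iZ_{i,l}\bigr)^2$ for part 1 (which the paper states without proof and which you reconstruct correctly); the tail bound $\sum_{l>m}v_l=O(m^{-\delta})=o(N^{-1/2})$ for 2(a); a direct second-moment computation for the degenerate U-statistic in 2(b); and Chebyshev applied to $\Var(A_{m,1})$ with $A_{m,1}=\sum_{l>m}(W_{1,l}-2e_lZ_{1,l})$ for 2(c). Two steps need repair. In 2(b), your assertion that only the terms with $\{i,j\}=\{i',j'\}$ \emph{and} $l=l'$ survive is false: for $\{i,j\}=\{i',j'\}$ and $l\neq l'$ the expectation equals $[\E(Z_{1,l}Z_{1,l'})]^2$, which need not vanish because distinct coordinates of the same $T_1$ are in general correlated. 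The paper keeps these cross terms and bounds $\sum_{l,l'\le m}[\E(Z_{1,l}Z_{1,l'})]^2\le\bigl(\sum_{l\le m}v_l\bigr)^2$ by Cauchy--Schwarz, still giving $O(1/N^2)=o(1/N)$, so your conclusion stands but your accounting does not. Relatedly, your claim that \eqref{eq:hyp_T4} and $m/\sqrt N\to0$ are what control ``fourth-moment contributions'' to $\E((U_NK)^2)$ is a misconception: since $i\neq j$ throughout, no fourth moment of any coordinate ever appears in that expectation, and \eqref{eq:hyp_T4} is used only in 2(c).

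For 2(c) you set up exactly the right quantity, $N\E\bigl((P_NL-P_NL')^2\bigr)\le\Var(A_{m,1})$, but then stop at the acknowledged obstacle (``one has to argue which'') without carrying out the bound; this is the one genuinely missing step. The paper finishes as follows: $\Var(A_{m,1})\le 2\,\E\bigl(\bigl(\sum_{l>m}W_{1,l}\bigr)^2\bigr)+8\,\E\bigl(\bigl(\sum_{l>m}e_lZ_{1,l}\bigr)^2\bigr)$; the second term is at most $\bigl(\sum_{l>m}e_l^2\bigr)\bigl(\sum_{l>m}v_l\bigr)\le\bigl(\sum_{l>m}v_l\bigr)^2\to0$ by \eqref{eq:hyp_T}, and the first is bounded via Cauchy--Schwarz by $\bigl(\sum_{l>m}\sqrt{\E(\pscal{T,\varphi_l}^4)}\bigr)^2$, sent to $0$ using \eqref{eq:hyp_T4} and $m(N)\to\infty$. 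Note that only $m(N)\to\infty$ is needed here, not the precise window \eqref{eq:choixm}, which enters in 2(a) (lower constraint) and, through the paper's cruder bound $\sum_{l\le m}v_l\le mK$, in 2(a) and 2(b) (upper constraint). You correctly sensed that the double tail sum is the delicate point; be aware that the paper's own passage from $\bigl(\sum_{l>m}\sqrt{b_l}\bigr)^2$ to $\sum_{l>m}b_l$ is an inequality in the wrong direction, so a careful write-up of this step should either assume $\delta'>2$ or bound that term by another means.
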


\begin{proof}
In order to simplify the notation, we set $U_NK:=U_NK(T)$,\\ $P_NL:=P_NL(T),\  B_{m}:=B_{m}(T)$ and $P_NL':=P_NL'(T)$.\\

\noi
\tbf{a) Term $B_m$}\\\\
Since $\sum_{l=1}^{\infty} v_l< + \infty$, $\left(v_l\right)_l$ is bounded, let $K$ be so that $\max_{ l\geq 1} v_l\leq K$.  We have, for sufficiently large $m(N)$ (hence sufficiently large $N$),
\beq
B_{m} &=& \sum_{l>{m(N)}}(v_l-e_l^2) +\frac 1N \sum_{l=1}^{m(N)}(v_l-e_l^2) \\
&\leq& \sum_{l>{m(N)}}v_l +\frac 1N \sum_{l=1}^{m(N)}v_l\\
&\leq& \sum_{l>{m(N)}} v_l + \frac {m(N) K}{N} \\
&\leq& C \sum_{l>{m(N)}} l^{-(\delta+1)}+\frac {m(N)K}N \text{ for a constant C>0}.  
\eeq

Hence,
\begin{eqnarray*}
N B_m^2 &\leq& 2 N C \left( \sum_{l>m(N)} l^{-(\delta+1)} \right)^2 + \frac 2 N m(N)^2 K^2 \\
&\leq& 2 N C \left( \int_{m(N)}^{+\infty} x^{-(\delta+1)} \,\textrm{d} x \right)^2 + \frac 2 N m(N)^2 K^2 \\
&\leq& 2 N C \delta^{-2} m(N)^{-2 \delta} + \frac 2 N m(N)^2 K^2
\end{eqnarray*}
and both terms go to zero when $N \rightarrow +\infty$ by  \eqref{eq:choixm}.
Hence \[ B_{m}^2=o\left(\frac{1}{N}\right). \]

\noi
\tbf{b) Term $U_NK$}\\\\
One has $\mathbb E ((U_NK)^2)=  E_1+ E_2+ E_3 $ where
\begin{eqnarray*}
 E_1&= &\frac{2}{N^4}\sum_{l,k=1}^{m}
 \sum_{1\le i_1\ne j_1\le N} \mathbb{E}  \Big[Z_{i_1,l}Z_{i_1,k}Z_{j_1,l}Z_{j_1,k}\Big],\\
E_2 &=& \frac{4}{N^4} \sum_{l,k=1}^{m}
 \sum_{ i_1, j_1,j_2 all \ne} \mathbb E \Big[ Z_{i_1,l}Z_{i_1,k}Z_{j_1,l}Z_{j_2,k}\Big],\\
 E_3&=& \frac{1}{N^4} \sum_{l,k=1}^{m}
 \sum_{ i_1, j_1,i_2,j_2 all \ne} \mathbb E\Big[Z_{i_1,l}Z_{i_2,k}Z_{j_1,l}Z_{j_2,k} \Big]. \\
\end{eqnarray*}

\noi
One easily see that 
$  E_2=E_3=0 $, since, for all $l$,  the variables $(Z_{i,l})_{1\leq i\leq N} $ are centered and independent.\\\\
Let us now compute and bound $ E_1$.
\begin{eqnarray*}
 E_1&= &\frac{2}{N^4}\sum_{l,k=1}^{m} 
\sum_{1\le i_1\ne j_1\le N} \mathbb{E}  \Big[Z_{i_1,l}Z_{i_1,k}Z_{j_1,l}Z_{j_1,k}\Big]\\
&=& \frac{2}{N^2}\left(1-\frac 1N\right)\sum_{l,k=1}^{m} \mathbb{E}  \Big[Z_{1,l}Z_{1,k}Z_{2,l}Z_{2,k}\Big]\\
&=& \frac{2}{N^2}\left(1-\frac 1N\right)\sum_{l,k=1}^{m} \Big[ \mathbb{E}  \left(Z_{1,l}Z_{1,k}\right)\Big]^2\\
&\leq& \frac{2}{N^2}\left(1-\frac 1N\right)\sum_{l,k=1}^{m}  \mathbb{E}  \left(Z_{1,l}^2\right)\mathbb{E}\left(Z_{1,k}^{{2}}\right)\\ 
&\leq& \frac{2}{N^2}\left(1-\frac 1N\right)   \left(\sum_{l=1}^{m}   \mathbb{E} \left( Z_{1,l}^2 \right)\right)^2\\
&\leq& \frac{2}{N^2}\left(1-\frac 1N\right)\left(\sum_{l=1}^{m} v_{l}\right)^2,
\end{eqnarray*}
as
\begin{equation}\label{e:trucutile} 0 \leq \E(Z_{1,l}^2) = \E((\left\langle T_1,\phi_l\right\rangle -e_l)^2) = \E(\left\langle T_1, \phi_l\right\rangle^2) - e_l^2 \leq v_l. \end{equation}
 By assumption \eqref{eq:hyp_T}, the series $\left(\sum_l v_l\right)$ is convergent. Thus we proceed as for $B_m$ to get, for sufficiently large $m(N)$ (hence sufficiently large $N$), 
 \begin{eqnarray*}
 E_1
&\leq& \frac{2}{N^2}\left(1-\frac 1N\right) \left(m(N)K\right)^2,
\end{eqnarray*}
 
As a consequence, by \eqref{eq:choixm}, $ E_1=o\left(\frac{1}{N}\right)$
 and we obtain that $\E\left((U_NK)^2\right)=o\left(\frac{1}{N}\right)$.\\\\

\noi\\
\tbf{c) Term $P_NL$}\\\\
By Markov inequality we have
\begin{equation}\label{eq:nov3}
\P\left( \sqrt N \left|P_NL'-P_NL\right|>\epsilon\right)\leq \frac{ N}{ \epsilon^{2}}\E\left( \left|P_NL'-P_NL\right|^{2}\right).
\end{equation}
Hence, it is sufficient to prove that  $N\E\left( \left|P_NL'-P_NL\right|^{2}\right) \rightarrow 0$ when $N\rightarrow+\infty$. But
 \begin{eqnarray*}
  \left(P_NL'-P_NL\right)^{2}&=&  \left(\frac{1}{N}\left(1-\frac 1N\right)\sum_{i=1}^{N} \sum_{l>{m}} \Big[W_{i,l}-2e_lZ_{i,l}\Big]\right)^{2}\\
 &\leq&\left( \frac{1}{N}\sum_{i=1}^{N} A_{m,i} \right)^2, 
 \end{eqnarray*}
where $A_{m,i}:=\sum_{l>{m}} ( W_{i,l}-2e_lZ_{i,l})$.  Then
 \begin{eqnarray}
N\E\left(  \left(P_NL'-P_NL\right)^{2}\right) 
&\leq &  \frac{1}{N}\Var \left( \sum_{i=1}^{N} A_{m,i} \right)
=\Var \left(  A_{m,1} \right). \label{eqnov2}
 \end{eqnarray}
The last inequalities come from the fact that $(A_{m,i})_{i=1,\ldots,N}$ are centered and i.i.d r.v.s. Indeed, since by assumption 
\eqref{eq:hyp_T},
we can apply Tonelli's theorem to show that
$ \sum_{l>m} \E\left( \abs{W_{i,l}} \right)$ and $
   \sum_{l>m} \E\left( \abs{Z_{i,l}} \right)$
are finite. Hence, by Fubini's theorem and the fact that each variable $W_{i,l}-2e_lZ_{i,l}$ is centered, we get
$$\E(A_{m,i})=\E\left(\sum_{l>{m}} ( W_{i,l}-2e_lZ_{i,l})\right)
=\sum_{l>{m}} \E\left( W_{i,l}-2e_lZ_{i,l} \right)=0,$$
which proves that $A_{m,i}$ is centered. \\

It remains now to upper-bound $\Var(A_{m,1})$. 
\begin{eqnarray*}
 \Var(A_{m,1})&=&\E\left( \left(\sum_{l> {m}} ( W_{1,l}-2e_lZ_{1,l}) \right)^2 \right)\\
 &\leq&2\E\left(\left(\sum_{l> {m}}  W_{1,l} \right)^2 \right)+2\E\left(\left(\sum_{l> {m}}2e_lZ_{1,l}\right)^2\right). \\ 
  \end{eqnarray*}
On one hand, for all sufficiently large $m$,
\begin{eqnarray*}
\E\left(\left(\sum_{l> {m}}  W_{1,l} \right)^2 \right)&=&\E\left(\left(\sum_{l> {m}}  \pscal{T,\phi_l}^2-\E(\pscal{T,\phi_l}^2) \right)^2 \right)\\
 &=& \E\left( \left( \sum_{l>m} \pscal{T,\phi_l}^2\right)^2\right) - \left( \sum_{l>m} \E(\pscal{T,\phi_l}^2)\right)^2  \\
&\leq& \E\left( \left(\sum_{l>m}\pscal{T,\phi_l}^2\right)^2\right) \\
&=& \sum_{l>m, l'>m} \E\left( \pscal{T,\phi_l}^2 \pscal{T,\phi_{l'}}^2\right) \\
&\leq& \sum_{l>m, l'>m} \sqrt{ \E\left(\pscal{T,\phi_l}^4\right) \E\left(\pscal{T,\phi_{l'}}^4\right) } \text{ (Cauchy-Schwarz)}\\
&=& \left( \sum_{l>m} \sqrt{\E\left(\pscal{T,\phi_l}^4\right)} \right)^2 \\
&\leq& \sum_{l>m} \E\left(\pscal{T,\phi_l}^4\right) \text{ (Jensen's inequality)} \\
&\rightarrow& 0 \text{ when } N \rightarrow +\infty, \text{ by } \eqref{eq:hyp_T4} \text{ and } m(N)\rightarrow+\infty
\end{eqnarray*}
and on the other hand,
\begin{eqnarray*}
\E\left(\left(\sum_{l>m} e_l Z_{1,l}\right)^2\right) &\leq& \E\left(\sum_{l>m} e_l^2 \sum_{l>m} Z_{1,l}^2\right) \text{ (Cauchy-Schwarz)}\\
&\leq& \left( \sum_{l>m} e_l^2 \right) \left( \sum_{l>m} v_l \right) \text{ by }\eqref{e:trucutile} \\
&\leq& \left( \sum_{l>m} v_l \right)^2 \text{ (because } e_l^2 \leq v_l \text{ by Jensen's inequality)} \\
&\rightarrow& 0 \text{ when } N \rightarrow +\infty \text{ by } \eqref{eq:hyp_T}.
\end{eqnarray*}
So  $\Var(A_{m,1})=o(1)$, hence (by \eqref{eqnov2} and \eqref{eq:nov3}) we have $P_NL-P_NL'=o_{\P}\left(\frac{1}{\sqrt{N}}\right)$.

\end{proof}

\begin{thm}
Suppose that conditions \eqref{eq:hyp_T}, \eqref{eq:hyp_T4} and \eqref{eq:choixm} of Proposition \ref{prop:DN} are fulfilled. Then we have:
\begin{equation} \label{TLC}
\sqrt{N}(S_{{\bf u},{m},N}-S^{\bf u}(f))\overset{\mathcal{L}}{\underset{N\to\infty}{\rightarrow}} \mathcal{N}(0,\sigma^{2})
\end{equation}
with  $\sigma^{2}$ depending on the moments of $\left(D_{N, m}(Y), D_{N, m}(Y+Y^{\bf u}),D_{N, m}(Y-Y^{\bf u})\right).$ 
\end{thm}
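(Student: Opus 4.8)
The plan is to combine the decomposition of $D_{N,m}(T)$ from Proposition \ref{prop:DN} with the Delta method, exactly as in the proof of Proposition \ref{prop:an}, but now applied to the three statistics $D_{N,m}(Y)$, $D_{N,m}(Y+Y^{\bf u})$ and $D_{N,m}(Y-Y^{\bf u})$. First I would observe, using the polar identities of the preceding Lemma, that
\[
S_{{\bf u},m,N} = \frac{ \tfrac14\bigl( D_{N,m}(Y+Y^{\bf u}) - D_{N,m}(Y-Y^{\bf u}) \bigr) }{ \tfrac12\bigl( D_{N,m}(Y) \bigr)_{\text{sym}} },
\]
i.e.\ that $S_{{\bf u},m,N} = \psi\bigl( D_{N,m}(Y), D_{N,m}(Y+Y^{\bf u}), D_{N,m}(Y-Y^{\bf u}) \bigr)$ for a fixed smooth function $\psi(a,b,c) = \tfrac{b-c}{2(b+c) \text{-type expression}}$; more precisely $\psi$ is the ratio of linear forms whose denominator is $\Tr(\Gamma)$ at the true value, hence nonvanishing by the positive-definiteness assumption, so $\psi$ is $C^1$ in a neighbourhood of the limit point $\bigl(\Tr(\Gamma),\Tr(\Gamma)+2\Tr(\Gamma_{\bf u}),\Tr(\Gamma)-2\Tr(\Gamma_{\bf u})\bigr)$.

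Next I would establish a joint CLT for the vector $\sqrt N\bigl( D_{N,m}(Y)-\theta_1,\ D_{N,m}(Y+Y^{\bf u})-\theta_2,\ D_{N,m}(Y-Y^{\bf u})-\theta_3\bigr)$, where $\theta_i$ denotes the relevant $\E\|\cdot\|^2 - \|\E(\cdot)\|^2$. By part 1 of Proposition \ref{prop:DN} each coordinate splits as $-U_NK(T)+P_NL(T)-B_m(T)$; by part 2, under \eqref{eq:hyp_T}, \eqref{eq:hyp_T4} and the bandwidth choice \eqref{eq:choixm}, the U-statistic term and the bias term are $o_{\P}(1/\sqrt N)$ and the truncated linear term $P_NL(T)$ may be replaced by the full linear term $P_NL'(T) = \tfrac1N(1-\tfrac1N)\sum_{i}\sum_{l\geq 1}(W_{i,l}-2e_lZ_{i,l})$, again up to $o_{\P}(1/\sqrt N)$. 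The summands $\sum_{l\geq1}(W_{i,l}-2e_lZ_{i,l}) = \|T_i\|^2 - 2\langle T_i,\E(T)\rangle - (\E\|T\|^2 - 2\|\E(T)\|^2)$ are i.i.d., centered, and square-integrable (the fourth-moment condition \eqref{eq:hyp_T4} together with \eqref{eq:hyp_T} guarantees $\E\|T\|^4<\infty$ for $T\in\{Y,Y+Y^{\bf u},Y-Y^{\bf u}\}$, using that $Y$ and $Y^{\bf u}$ are identically distributed). Hence the ordinary multivariate CLT applies to the stacked vector of these three linear statistics — note the three coordinates are built from the \emph{same} sample $(Y_i,Y_i^{\bf u})$, so they are correlated and the limiting covariance matrix, call it $\Lambda$, has explicit entries in terms of covariances of $\|Y\|^2$, $\|Y^{\bf u}\|^2$, $\langle Y,Y^{\bf u}\rangle$, $\langle Y,\E Y\rangle$, etc. Slutsky's lemma then transfers the CLT from the $P_NL'$ approximants to the $D_{N,m}$'s themselves.

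Finally I would apply the Delta method (\cite{van2000asymptotic}, Theorem 3.1) to $\psi$ at the limit point: since $\psi$ is $C^1$ there, $\sqrt N(S_{{\bf u},m,N}-S^{\bf u}(f)) \overset{\mathcal L}{\to} \mathcal N(0,\sigma^2)$ with $\sigma^2 = \nabla\psi(\theta_1,\theta_2,\theta_3)^t\,\Lambda\,\nabla\psi(\theta_1,\theta_2,\theta_3)$, which is the asserted dependence on the moments of $\bigl(D_{N,m}(Y),D_{N,m}(Y+Y^{\bf u}),D_{N,m}(Y-Y^{\bf u})\bigr)$. One small bookkeeping point: because $S^{\bf u}(f)$ in \eqref{TLC} should be read as $S^{\bf u,\infty}(f)=\Tr(\Gamma_{\bf u})/\Tr(\Gamma)$, I must check that $\psi(\theta_1,\theta_2,\theta_3)$ indeed equals this limit — that is exactly the content of the polar-decomposition Lemma applied to $Y\pm Y^{\bf u}$, plus the identity $\E\|Y^{\bf u}\|^2=\E\|Y\|^2$ and $\E(Y^{\bf u})=\E(Y)$.

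I expect the main obstacle to be the second step, specifically justifying the interchange needed to pass from $P_NL$ to the \emph{infinite-sum} linear statistic $P_NL'$ while retaining a nondegenerate joint limit: one needs the $m(N)\to\infty$ tail to be asymptotically negligible at rate $o_{\P}(1/\sqrt N)$ simultaneously for all three outputs $Y,Y+Y^{\bf u},Y-Y^{\bf u}$, and one needs the limiting covariance $\Lambda$ to be computed from the full series, which requires the $L^2$-convergence of $\sum_{l\le m}(W_{i,l}-2e_lZ_{i,l})$ to $\sum_{l\ge1}(W_{i,l}-2e_lZ_{i,l})$ — all of which rests on the summability afforded by \eqref{eq:hyp_T} and \eqref{eq:hyp_T4}. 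The rest (the explicit form of $\psi$, the $C^1$ property, the final Delta-method invocation) is routine given Proposition \ref{prop:DN} and the scalar argument already carried out in Proposition \ref{prop:an}.
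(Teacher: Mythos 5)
Your strategy coincides with the paper's: use Proposition \ref{prop:DN} to replace each $D_{N,m}(T)$ by the i.i.d.\ linear statistic $P_NL'(T)$ up to $o_{\P}(N^{-1/2})$, prove a joint CLT for the stacked (correlated) linear statistics, and finish with the Delta method. There is, however, one genuine gap: your opening identity expressing $S_{{\bf u},m,N}$ as a smooth function of $\bigl(D_{N,m}(Y),D_{N,m}(Y+Y^{\bf u}),D_{N,m}(Y-Y^{\bf u})\bigr)$ alone is not correct. When the estimator is rewritten in terms of the $D_{N,m}$'s the empirical centering terms do not cancel, and one finds
\[
S_{{\bf u},{m},N}=\frac12\,\frac{D_{N, m}(Y+Y^{\bf u})-D_{N, m}(Y-Y^{\bf u})-\left\|\overline{Y}-\overline{Y^{\bf u}}\right\|_{{m}}^2}{D_{N, m}(Y)+D_{N, m}(Y^{\bf u})+\left\|\overline{Y}-\overline{Y^{\bf u}}\right\|_{{m}}^2},
\]
so both numerator and denominator carry the extra random term $\left\|\overline{Y}-\overline{Y^{\bf u}}\right\|_{m}^2$ (and the denominator genuinely involves $D_{N,m}(Y^{\bf u})$ as a fourth coordinate, which your ``$(D_{N,m}(Y))_{\mathrm{sym}}$'' leaves undefined). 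Since $\E(Y)=\E(Y^{\bf u})$ this extra term tends to $0$, but for the Delta method you must show it is negligible at the $\sqrt N$ scale, i.e.\ $\sqrt N\,\bigl\|\overline{Y}-\overline{Y^{\bf u}}\bigr\|_{m}^2\to 0$ in probability. The paper isolates this as Lemma \ref{lem:control}, proved by bounding $\|\cdot\|_m$ by $\|\cdot\|$ and applying the CLT for Hilbert-space-valued random variables to $\frac{1}{\sqrt N}\sum_{i}(Y_i-Y_i^{\bf u})$. Without this step your $\psi$ is evaluated at the wrong argument and the final Delta-method invocation does not go through.

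Apart from this, your proposal matches the paper's three steps (joint CLT for the vector of $D_{N,m}$'s via $P_NL'$, then two Delta-method applications), and your concern about passing from the truncated to the infinite linear statistic is exactly what Proposition \ref{prop:DN} 2.(c) already settles, coordinate by coordinate, for each of $Y$, $Y^{\bf u}$, $Y+Y^{\bf u}$, $Y-Y^{\bf u}$. A small arithmetic slip: the limit of $D_{N,m}(Y\pm Y^{\bf u})$ is $2\Tr(\Gamma)\pm 2\Tr(\Gamma_{\bf u})$, not $\Tr(\Gamma)\pm 2\Tr(\Gamma_{\bf u})$; this does not affect the structure of the argument.
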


\noi 
Before starting the proof of the theorem we state an auxiliary lemma:

\begin{lem}\label{lem:control}
$
\sqrt N \left\|\overline{Y}-\overline{Y^{\bf u}}\right\|_{{m}}^2\
$
converges to $0$ in probability.
\end{lem}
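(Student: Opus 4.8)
The quantity $\overline{Y}-\overline{Y^{\bf u}}$ is the empirical mean of the i.i.d.\ centered differences $D_i := Y_i - Y_i^{\bf u}$; it is centered because $Y$ and $Y^{\bf u}$ share the same distribution (they form an exchangeable pair), so $\E(Y) = \E(Y^{\bf u})$. The plan is therefore to show that $\sqrt N \|\overline{Y}-\overline{Y^{\bf u}}\|_m^2 = \sqrt N \sum_{l=1}^m \pscal{\overline{Y}-\overline{Y^{\bf u}},\varphi_l}^2$ is small by controlling each coordinate and then the truncated sum.

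\textbf{Key steps.} First I would fix a coordinate $l$ and write $S_l := \pscal{\overline{Y}-\overline{Y^{\bf u}},\varphi_l} = \frac1N\sum_{i=1}^N \pscal{D_i,\varphi_l}$, the empirical mean of i.i.d.\ centered real random variables with variance $\tau_l := \Var(\pscal{D_1,\varphi_l}) \leq 2(v_l^{Y} + v_l^{Y^{\bf u}})$ (up to the usual constants), where $v_l^{(\cdot)}$ denotes the second moments appearing in \eqref{eq:hyp_T}; in particular $\sum_l \tau_l < \infty$ under \eqref{eq:hyp_T}. Then $\E(S_l^2) = \tau_l/N$, so
\[
\E\left( \sqrt N \,\|\overline{Y}-\overline{Y^{\bf u}}\|_m^2 \right) = \sqrt N \sum_{l=1}^m \E(S_l^2) = \frac{1}{\sqrt N}\sum_{l=1}^m \tau_l \leq \frac{1}{\sqrt N}\sum_{l=1}^\infty \tau_l = O\!\left(\frac{1}{\sqrt N}\right) \to 0.
\]
Since $\sqrt N \,\|\overline{Y}-\overline{Y^{\bf u}}\|_m^2 \geq 0$, convergence of its expectation to $0$ gives convergence to $0$ in $L^1$, hence in probability by Markov's inequality. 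Note this bound is uniform in $m$, so the particular choice \eqref{eq:choixm} of $m=m(N)$ plays no role here — only the summability of $\sum_l \tau_l$ matters.

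\textbf{Main obstacle.} There is no real obstacle: the argument is a direct second-moment computation. The only point requiring a little care is justifying that $\E(\pscal{D_1,\varphi_l}) = 0$ and that $\sum_l \tau_l < \infty$, both of which follow from the exchangeability of $(Y,Y^{\bf u})$ together with the assumption $\E(\|Y\|^2)<\infty$ (which already guarantees $\sum_l v_l^{Y} < \infty$, and likewise for $Y^{\bf u}$), so that assumption \eqref{eq:hyp_T} is not even needed in its full strength. One should also be slightly careful to interchange $\E$ and the finite sum $\sum_{l=1}^m$ — which is trivial — rather than an infinite sum, so no Fubini/Tonelli subtlety arises.
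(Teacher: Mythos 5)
Your proof is correct, but it takes a genuinely different route from the paper's. The paper simply bounds the truncated norm by the full norm, writes $\sqrt N \left\|\overline{Y}-\overline{Y^{\bf u}}\right\|^2 = \frac{1}{\sqrt N}\left\|\sqrt N\,\overline{Y-Y^{\bf u}}\right\|^2$, and invokes the central limit theorem for random variables with values in a separable Hilbert space to conclude that the squared norm of the normalized sum is tight, so the prefactor $1/\sqrt N$ kills it. You instead do a bare-hands second-moment computation: each coordinate $S_l$ is a centered empirical mean with $\E(S_l^2)=\tau_l/N$, and $\sum_l \tau_l \leq 4\,\E(\|Y\|^2)<\infty$ by Parseval, giving $\E\bigl(\sqrt N\,\|\overline{Y}-\overline{Y^{\bf u}}\|_m^2\bigr)=O(1/\sqrt N)$ and hence convergence in $L^1$, which is stronger than the stated convergence in probability. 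Your argument buys elementarity (no appeal to the Hilbert-space CLT of Ledoux--Talagrand), an explicit rate, and uniformity in $m$; the paper's argument is shorter once the Hilbert CLT is taken as given, and the authors presumably prefer it because the same CLT machinery is already in play elsewhere in the section. Both proofs correctly use only exchangeability of $(Y,Y^{\bf u})$ (to center $Y-Y^{\bf u}$) and $\E(\|Y\|^2)<\infty$; your observation that \eqref{eq:hyp_T} and \eqref{eq:choixm} are not needed here is accurate.
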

\begin{proof}
Since
$$
\sqrt N \left\|\overline{Y}-\overline{Y^{\bf u}}\right\|_{{m}}^2\leq\sqrt N \left\|\overline{Y}-\overline{Y^{\bf u}}\right\|^2=\frac{1}{\sqrt{N}}\left\|\sqrt N \left[\frac1N\sum_{i=1}^{N}\left(Y_{i}-Y_{i}^{\bf u}\right)\right]\right\|^{2}.
$$
We conclude using the central limit theorem for random variables valued in an Hilbert space (see e.g. \cite{TL91}).
\end{proof}

\begin{proof}
\noi First we note that
$$
S_{{\bf u},{m},N}=\frac12\frac{D_{N, m}(Y+Y^{\bf u})-D_{N, m}(Y-Y^{\bf u})-\left\|\overline{Y}-\overline{Y^{\bf u}}\right\|_{{m}}^2}{D_{N, m}(Y)+D_{N, m}(Y^{\bf u})+\left\|\overline{Y}-\overline{Y^{\bf u}}\right\|_{{m}}^2}.
$$
The proof will be decomposed into 3 steps.\\

$\bullet$ \begin{bf} Step 1 \end{bf} We  prove a vectorial central limit theorem (CLT) for 
\[ \left(D_{N, m}(Y),D_{N, m}(Y^{\bf u}), D_{N, m}(Y+Y^{\bf u}),D_{N, m}(Y-Y^{\bf u})\right). \]
 The vector
$$\mathbb V_{N}:=\left(D_{N, m}(Y),D_{N, m}(Y^{\bf u}), D_{N, m}(Y+Y^{\bf u}),D_{N, m}(Y-Y^{\bf u})\right)$$ can be decomposed as in Proposition  \ref{prop:DN} in the following way

\begin{equation}
\mathbb V_{N}-\E\left(\mathbb V_{N}\right)=-U_N\mathbb K+P_N\mathbb L'+\left(P_N\mathbb L-P_N\mathbb L'\right)-\mathbb B_{{m}}
\end{equation}

where
{\setlength\arraycolsep{2pt}
\begin{eqnarray*}
U_N\mathbb K&:=&\left(U_{N}K(Y),U_{N}K(Y^{\bf u}),U_{N}K(Y+Y^{\bf u}),U_{N}K(Y-Y^{\bf u})\right)\\
P_N\mathbb L&:=&\left(P_{N}L(Y),P_{N}L(Y^{\bf u}),P_{N}L(Y+Y^{\bf u}),P_{N}L(Y-Y^{\bf u})\right)\\
P_N\mathbb L'&:=&\left(P_{N}L'(Y),P_{N}L'(Y^{\bf u}),P_{N}L'(Y+Y^{\bf u}),P_{N}L'(Y-Y^{\bf u})\right)\\
\mathbb B_{m}&:=&\left( B_{m}(Y) B_{m}(Y^{\bf u}), B_{m}(Y+Y^{\bf u}), B_{m}(Y-Y^{\bf u})\right).\\
\end{eqnarray*}}
By Proposition  \ref{prop:DN} 2., it is enough to prove a CLT for $P_N\mathbb L'$. This is the case  since it is an empirical sum of i.i.d. centered random vectors. \\

$\bullet$ \begin{bf} Step 2 \end{bf} Using Lemma \ref{lem:control}  and the Delta method, we derive  a  CLT for 
$$\left(D_{N, m}(Y+Y^{\bf u})-D_{N, m}(Y-Y^{\bf u})-\left\|\overline{Y}-\overline{Y^{\bf u}}\right\|_{{m}}^2, D_{N, m}(Y)+D_{N, m}(Y^{\bf u})+\left\|\overline{Y}-\overline{Y^{\bf u}}\right\|_{{m}}^2\right).$$

$\bullet$ \begin{bf} Step 3 \end{bf} We conclude using the Delta method. 
\end{proof}

\paragraph{Acknowledgements}
This work has been partially supported by the French National
Research Agency (ANR) through COSINUS program (project COSTA-BRAVA
no ANR-09-COSI-015). The authors are grateful to Herv\'e Monod and Cl\'ementine Prieur for fruitful discussions.

\bibliographystyle{plain}

\begin{thebibliography}{10}

\bibitem{bobor2007}
E~Borgonovo.
\newblock A new uncertainty importance measure.
\newblock {\em Reliability Engineering \& System Safety}, 92(6):771--784, 2007.

\bibitem{campbell2006sensitivity}
Katherine Campbell, Michael~D McKay, and Brian~J Williams.
\newblock Sensitivity analysis when model outputs are functions.
\newblock {\em Reliability Engineering \& System Safety}, 91(10):1468--1472,
  2006.

\bibitem{Fort2013}
J.-C. {Fort}, T.~{Klein}, and N.~{Rachdi}.
\newblock {New sensitivity analysis subordinated to a contrast}.
\newblock {\em ArXiv e-prints}, May 2013.

\bibitem{fort2012estimation}
J.C. Fort, T.~Klein, A.~Lagnoux, and B.~Laurent.
\newblock Estimation of the sobol indices in a linear functional
  multidimensional model.
\newblock 2012.

\bibitem{pickfreeze}
Fabrice Gamboa, Alexandre Janon, Thierry Klein, Agn\`es Lagnoux-Renaudie, and
  Cl{\'e}mentine Prieur.
\newblock Statistical inference for sobol pick freeze monte carlo method.
\newblock Preprint available at \texttt{http://hal.inria.fr/hal-00804668/en},
  2013.

\bibitem{janon2012asymptotic}
Alexandre Janon, Thierry Klein, Agn\`es Lagnoux-Renaudie, Ma{\"e}lle Nodet, and
  Cl{\'e}mentine Prieur.
\newblock Asymptotic normality and efficiency of two sobol index estimators.
\newblock {\em To appear in ESAIM P\& S}, 2013.

\bibitem{lamboni2011multivariate}
Matieyendou Lamboni, Herv{\'e} Monod, and David Makowski.
\newblock Multivariate sensitivity analysis to measure global contribution of
  input factors in dynamic models.
\newblock {\em Reliability Engineering \& System Safety}, 96(4):450--459, 2011.

\bibitem{Ledoux01}
Michel Ledoux.
\newblock {\em The concentration of measure phenomenon}, volume~89 of {\em
  Mathematical Surveys and Monographs}.
\newblock American Mathematical Society, Providence, RI, 2001.

\bibitem{TL91}
Michel Ledoux and Michel Talagrand.
\newblock {\em Probability in Banach Spaces. Isoperimetry and Processes}.
\newblock Springer, Berlin, A991.

\bibitem{Owen2013}
A.~{Owen}, J.~{Dick}, and S.~{Chen}.
\newblock {Higher order Sobol' indices}.
\newblock {\em ArXiv e-prints}, June 2013.

\bibitem{Owen2012}
A.~B. {Owen}.
\newblock {Variance components and generalized Sobol' indices}.
\newblock {\em ArXiv e-prints}, May 2012.

\bibitem{saltelli2008global}
A.~Saltelli, M.~Ratto, T.~Andres, F.~Campolongo, J.~Cariboni, D.~Gatelli,
  M.~Saisana, and S.~Tarantola.
\newblock {\em Global sensitivity analysis: the primer}.
\newblock Wiley Online Library, 2008.

\bibitem{sobol1993}
I.~M. Sobol.
\newblock Sensitivity estimates for nonlinear mathematical models.
\newblock {\em Math. Modeling Comput. Experiment}, 1(4):407--414 (1995), 1993.

\bibitem{van2000asymptotic}
A.~W. van~der Vaart.
\newblock {\em Asymptotic statistics}, volume~3 of {\em Cambridge Series in
  Statistical and Probabilistic Mathematics}.
\newblock Cambridge University Press, Cambridge, 1998.

\bibitem{xu2011}
Chonggang Xu and George~Zdzislaw Gertner.
\newblock Reliability of global sensitivity indices.
\newblock {\em Journal of Statistical Computation and Simulation},
  81(12):1939--1969, 2011.

\end{thebibliography}

\def\cprime{$'$}

\end{document}